\newcounter{thm}
\numberwithin{equation}{section}
\numberwithin{thm}{section}
\newtheorem{lemma}[thm]{Lemma}
\newtheorem{proposition}[thm]{Proposition}
\newtheorem{theorem}[thm]{Theorem}
\theoremstyle{definition}
\newtheorem*{remark*}{Remark}
\newtheorem*{example*}{Example}
\newcommand{\linePage} {\noindent\makebox[\linewidth]{\rule{\textwidth}{1pt}} \\}
\newcommand{\dd} {\mathrm{d}}
\newcommand{\tr} {\operatorname{tr}}
\newcommand{\braket}[2]{\langle #1 | #2 \rangle}
\newcommand{\norm}[1]{\| #1 \|}
\newcommand{\R}{\mathbb{R}}
\newcommand{\N} {\mathbb N}
\newcommand{\E} {\mathcal E}
\newcommand{\ol}[1]{\overline{#1}}
\newcommand{\ve}{\varepsilon}
\newcommand{\rk} {\right}
\newcommand{\lk} {\left}
\newcommand {\numberthis} {\addtocounter{equation}{1}\tag{\theequation}}
\newcommand{\vn}{{\vec n}}
\newcommand{\Egl}{{\mathcal{E}_g^\ell}}
\title{Triviality of a model of particles with point interactions in  the thermodynamic limit\footnote{\copyright\ 2016 by the authors. This work may be
   reproduced, in its entirety, for non-commercial purposes.}}
\author{Thomas Moser and Robert Seiringer
\\ \\  \small{Institute of Science and Technology Austria (IST Austria)}\\  \small{Am Campus 1, 3400
  Klosterneuburg, Austria}}
\date{November 17, 2016}
\begin{document}

\maketitle

\begin{abstract}
We consider a  model of fermions interacting via point interactions, defined via a certain weighted Dirichlet form. While for two particles the interaction corresponds to infinite scattering length, the presence of further particles effectively decreases the interaction strength. We show that the model becomes trivial in the thermodynamic limit, in the sense that the free energy density at any given particle density and temperature agrees with the corresponding expression for non-interacting particles. 
\end{abstract}

\smallskip
\noindent \textbf{Mathematics Subject Classification (2010).} 81Q10, 46N50, 82B10

\noindent \textbf{Keywords.} point interactions, scattering length, Fermi gas, thermodynamic limit


\section{Introduction}

Due to their relevance for cold-atom physics \cite{zwerger}, quantum-mechanical models of particles with zero-range interactions have recently received a lot of attention. Of particular interest is the unitary limit of infinite scattering length, where one has scale invariance due to the lack of any intrinsic length scale (see, e.g., \cite{braaten,burovski,goulko,jonsell,werner}). Despite some effort \cite{dell,finco,correggi,correggi2,MS}, it remains an open problem to establish the existence of a many-particle model with two-body point interactions. 
Such a model is known to be unstable in the case of bosons (a fact known as Thomas effect \cite{braaten,Thomas1935,correggi}, closely related to the Efimov effect \cite{efimov,Tamura1991,Yafaev74}) and hence can only exist for  fermionic particles. 
In contrast, the two-body problem is completely understood and point interactions can be characterized via self-adjoint extensions of the Laplacian on $\R^3\setminus \{0\}$ (see \cite{albeverio} for details). These self-adjoint extensions can be interpreted as corresponding to an attractive point interaction, parametrized by the scattering length $a$, with interaction strength increasing with $1/a$. For  non-positive scattering length, $a \leq 0$, the attraction is too weak to support bound states, while there exists a negative energy bound state for $a>0$.

In the case of non-positive scattering length, $a\leq 0$, corresponding to the absence of two-body bound states, point interactions can alternatively be defined via the quadratic form
\begin{equation}\label{qf2}
\int_{\R^3} \left( \frac 1{|x|} - \frac 1 a\right)^2 \left| \nabla f(x)\right|^2 \, \dd x \qquad \text{on}\quad L^2(\R^3, (|x|^{-1}-a^{-1})^2 \dd x)
\end{equation}
The unitary limit corresponds to $a^{-1} = 0$. Recall that the scattering length is defined (see, e.g., \cite[Appendix~C]{LSSY}) via the asymptotic behavior of the solution to the zero-energy scattering equation, which in this case is simply equal to $|x|^{-1} - a^{-1}$, corresponding to $f\equiv 1$. To see that \eqref{qf2} corresponds to a point interaction at the origin, note that an integration by parts shows that 
\begin{align}\nonumber
\int_{|x|\geq \epsilon} \left( \frac 1{|x|} - \frac 1 a\right)^2 \left| \nabla f(x)\right|^2 \, \dd x & = \int_{|x|\geq \epsilon}  \left| \nabla \left( \frac 1{|x|} - \frac 1 a\right) f(x)\right|^2 \, \dd x \\ & \quad - \int_{|x|=\epsilon} \left( \frac 1{|x|} - \frac 1 a\right) \frac 1{|x|^2} |f(x)|^2 \dd\omega  \label{1.2}
\end{align} 
for any $\epsilon>0$. The last term vanishes as $\epsilon\to 0$ if $f$ vanishes faster than ${|x|^{1/2}}$ at the origin. 

We consider here a many-body generalization of \eqref{qf2}, which was introduced in \cite{albeverio2}. It has the advantage of being manifestly well-defined, via a non-negative Dirichlet form. As already noted above, in general it is notoriously hard to define many-body systems with point interactions, see \cite{correggi,correggi2,dell,finco,MS}, due to the inherent instability problems.  The model under consideration here was studied in \cite{Frank} were it was shown to satisfy a Lieb--Thirring inequality, i.e., the energy can be bounded from below by a semiclassical expression of the form $C \int \rho(x)^{5/3}\dd x$, with $\rho$ the particle density and $C$ a positive constant. Up to the value of $C$, this is the same as the inequality for non-interacting fermions used by Lieb and Thirring \cite{LT1,LT2} in their proof of stability of matter. (For other recent work on Lieb-Thirring inequalities for interacting particles, see \cite{Lu1,Lu2,Lu3,Lu4}.)

The model considered here has the disadvantage that the interaction is not purely two-body, however. In fact, it is a full many-body interaction, its strength depends on the position of all the particles and is weakened due to their presence. We shall show here that the effects of the interaction actually disappear in the thermodynamic limit, and the thermodynamic free energy density agrees with the one for non-interacting fermions.

In the next section, we shall introduce the model and explain our main results. The rest of the paper is devoted to their proof.

\section{Model and main results}

For $N\geq 2$, $\vec x =(x_1,\dots,x_N) \in \R^{3N}$, let $g : \R^{3N}\to \R$ denote the function 
\begin{gather*}
g(\vec x) = \sum_{1\leq i < j \leq N } \frac 1 {|x_i - x_j|}\,. \numberthis
\label{eqG}
\end{gather*}
We consider fermions with $q\geq 1$ internal (spin) states, described by wave functions in the  subspace $\mathcal{A}_q^N \subset L^2( (\R^3 \times \{1,\dots,q\})^N,g(\vec x)^2 \dd \vec x)$ of functions that are  totally antisymmetric with respect to permutations of the variables $y_i = (x_i, \sigma_i)$, where $x_i\in\R^3$ and $\sigma_i\in \{1,\dots, q\}$.  For $\psi\in \mathcal{A}_q^N$, our model is defined via the quadratic form
\begin{equation}\label{defeg}
\E_g(\psi) =  \sum_{i=1}^N \int_{\R^{3N}} g(\vec x)^2 |\nabla_i \psi(\vec y)|^2 \dd \vec y
\end{equation}
where $\nabla_i$ stands for the gradient with respect to $x_i \in \R^3$, and we introduced the shorthand notation $\int \dots \dd \vec y = \sum_{\vec \sigma}  \int \dots \dd \vec x$ with $\vec \sigma = (\sigma_1,\dots,\sigma_N)$. Since $g$ is a harmonic function away from the planes $\{x_i=x_j\}$ of particle intersection, an integration by parts as in \eqref{1.2} shows that \eqref{defeg} corresponds to a model of point interactions, as $\E_g(\psi) = \sum_{i=1}^N \int |\nabla_i  g\psi|^2 $ in case $\psi$ has compact support away from these planes. More generally, $\E_g(\psi) = \sum_{i=1}^N \int |\nabla_i  g\psi|^2 $ holds if $\psi$ vanishes faster than the square root of the distance to the planes of intersection, which is in particular the case for smooth and completely antisymmetric functions of the spatial variables. In other words, the model is trivial for $q=1$.

For $N$ particles in a cubic box $[0,L]^3 \subset \R^3$, the free energy at temperature $T = \beta^{-1} >0$ is defined as usual as 
\begin{equation}
F_g = -T \ln \tr e^{-\beta H_g} 
\end{equation}
where $H_g$ denotes the operator defined by the quadratic form \eqref{defeg},  restricted to functions in $\mathcal{A}_q^N\cap H^1(\R^{3N};g(\vec x)^2 \dd \vec x)$ with support in $([0,L]^3)^{ N}$. The latter restriction corresponds to choosing Dirichlet boundary conditions on the boundary of the cube $[0,L]^3$. 
Alternatively, one can use the variational principle \cite[Lemma~14.1]{StabMatt} to write the free energy as
\begin{gather*}
F_{g}(\beta,N,L) = - T \ln \sup_{\substack{\{\psi_k\} \\ \braket{\psi_i}{\psi_j}_g = \delta_{ij} }} \sum_k  e^{- \beta \E_g(\psi_k) } \label{eqFreeEnergy} \numberthis
\end{gather*}
where $\langle\,\cdot\,|\,\cdot\,\rangle_g$ denotes the inner product on $L^2( (\R^3\times \{1,\dots,q\})^N,g(\vec x)^2 \dd\vec x)$, 
\begin{gather*}
\braket {\psi_i}{\psi_j}_g =  \int_{\R^{3N}} g^2(\vec x) \ol {\psi_i(\vec y)} \psi_j(\vec y) \dd \vec y, \numberthis
\label{eqScalarProduct}
\end{gather*}
and the supremum is over all finite sets of orthonormal functions in $\mathcal{A}_q^N$ with support in $([0,L]^3)^{ N}$.  We are interested in the thermodynamic limit
\begin{equation}
f_g(\beta ,\rho) = \lim_{N\to \infty}  \frac{ \rho}{N} F_g(\beta,N,(N/\rho)^{1/3})
\end{equation}
where $\rho>0$ denotes the particle density. 

In the non-interacting case corresponding to taking $g\equiv 1$, the free energy density can be evaluated explicitly, and is given by \cite{thirring}
\begin{equation}\label{fni}
f(\beta,\rho) = \sup_{\mu\in \R} \left [ \mu \rho - \frac {qT } {(2\pi)^3} \int_{\R^3} \ln \left ( 1 + e^{- \beta (p^2 -\mu) }\right)  dp\right]
\end{equation}
Our main result shows that the two functions, $f_g$ and $f$, are actually identical. 

\begin{theorem}
For any $\beta>0$ and $\rho>0$, and any $q\geq 1$, 
\begin{gather*}
f_g(\beta,\rho) = f(\beta, \rho)  \numberthis \label{eqthm}
\end{gather*}
\label{thmAll}
\end{theorem}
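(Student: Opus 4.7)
We establish the two inequalities $f_g\leq f$ and $f_g\geq f$ separately. The first is a direct application of the variational formula \eqref{eqFreeEnergy}, while the second contains the substantive content of the theorem.

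\textbf{Upper bound $f_g\leq f$.} Fix $\varepsilon>0$ and choose a finite orthonormal family $\{\phi_k\}$ of Slater determinants formed from Dirichlet eigenfunctions of $-\Delta$ on $[0,L]^3$ tensored with spinors, so that $\sum_k e^{-\beta\sum_i\int|\nabla_i\phi_k|^2 d\vec y}\ge Z_{\mathrm{ni}}-\varepsilon$, where $Z_{\mathrm{ni}}=\tr e^{-\beta H_0}$ denotes the non-interacting partition function. Set $\psi_k:=\phi_k/g$. Since each $\phi_k$ is smooth and bounded while $g$ diverges like $|x_i-x_j|^{-1}$ at the coincidence planes, the $\psi_k$ vanish there linearly---in particular faster than the square root of the distance---so the identity recalled after \eqref{defeg} gives $\E_g(\psi_k)=\sum_i\int|\nabla_i\phi_k|^2 d\vec y$. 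Weighted orthonormality $\langle\psi_k,\psi_l\rangle_g=\int\bar\phi_k\phi_l d\vec y=\delta_{kl}$ is automatic, and $\psi_k$ has support in $([0,L]^3)^N$. Inserting into \eqref{eqFreeEnergy}, sending $\varepsilon\to 0$, and passing to the thermodynamic limit yields $f_g\le f$.

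\textbf{Lower bound $f_g\ge f$.} The nontrivial point is that for $q\ge 2$ the form domain of $\E_g$ strictly contains $\psi$ that do not vanish at different-spin coincidences; the corresponding $\phi=g\psi$ blows up like $|x_i-x_j|^{-1}$ and lies outside $H^1$, so such trial states can in principle make $Z_g>Z_{\mathrm{ni}}$ at finite $N$, and the task is to show this gain is subextensive. Given an orthonormal family $\{\psi_k\}$ in the weighted space, introduce a Jastrow factor $J_\delta(\vec x)=\prod_{i<j}F_\delta(|x_i-x_j|)$ with $F_\delta:[0,\infty)\to[0,1]$ smooth, $F_\delta(r)=0$ for $r\le\delta$, $F_\delta(r)=1$ for $r\ge 2\delta$, and $|F'_\delta|\le C\delta^{-1}$, and set $\tilde\psi_k:=J_\delta\psi_k$. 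The modified functions vanish in a neighborhood of every coincidence plane, so $\tilde\phi_k:=g\tilde\psi_k\in H^1$ and $\E_g(\tilde\psi_k)=\sum_i\int|\nabla_i\tilde\phi_k|^2 d\vec y$. A small Gram--Schmidt correction restores $\langle\cdot,\cdot\rangle_g$-orthonormality at negligible energy cost, and the non-interacting variational principle applied to $\{\tilde\phi_k\}$ bounds $\sum_k e^{-\beta\E_g(\tilde\psi_k)}$ from above by $Z_{\mathrm{ni}}$. Combined with a uniform per-particle bound $\E_g(\tilde\psi_k)-\E_g(\psi_k)\le\eta(\delta)$ with $\eta(\delta)\to 0$ as $\delta\to 0$, this yields $Z_g\le Z_{\mathrm{ni}}\,e^{\beta N\eta(\delta)}$, hence $f_g\ge f$ after sending $N\to\infty$ at fixed $\rho$ and then $\delta\to 0$.

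\textbf{Main obstacle.} The decisive estimate is the uniform bound on the Jastrow-induced energy inflation $\eta(\delta)$. Expanding $|g\nabla_i\tilde\psi_k|^2$ and applying Cauchy--Schwarz reduces it to controlling $\sum_i\int g^2|\psi_k|^2|\nabla_i J_\delta|^2 d\vec y$, which is concentrated on configurations where some pair $(i,j)$ satisfies $\delta<|x_i-x_j|<2\delta$ and where $g^2|\nabla_i J_\delta|^2$ is of order $\delta^{-4}$. The Lieb--Thirring-type inequality of \cite{Frank} forces any trial state of bounded $\E_g$-energy to have its two-body density bounded above by the free Fermi value of order $\rho^2$, so the measure of ``bad'' pair configurations weighted by $|\phi_k|^2=g^2|\psi_k|^2$ is of order $N\rho\delta^3$; multiplied by $\delta^{-4}$ this yields $\eta(\delta)=O(\rho\delta)$, vanishing as $\delta\to 0$. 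Turning this heuristic into a bound uniform over all relevant orthonormal families, and controlling the Gram--Schmidt step, constitutes the main technical work of the proof.
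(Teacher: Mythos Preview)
Your upper bound is fine and coincides with the paper's observation that $\E_g(\phi/g)=\sum_i\int|\nabla_i\phi|^2$ for smooth $\phi$.

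The lower bound, however, has two genuine gaps.

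\emph{First, the order of limits is wrong.} You propose to send $N\to\infty$ at fixed $\delta$ and only afterwards let $\delta\to 0$. But for fixed $\delta>0$ the expected number of pairs with $|x_i-x_j|<\delta$ in a typical configuration is of order $N^2\delta^3/L^3=N\rho\delta^3\to\infty$, so the set $\{J_\delta=0\}$ has $g^2|\psi|^2$-measure tending to $1$ even for the ground state $\psi\equiv\mathrm{const}$. The Jastrow cutoff therefore annihilates almost every trial state and the Gram--Schmidt ``small correction'' is not small at all. Any Jastrow scheme would need $\delta=\delta_N\to 0$ with $N$, which in turn forces you to control the $\delta^{-2}$-singular error term uniformly in $N$.

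\emph{Second, the Lieb--Thirring inequality of \cite{Frank} does not give the two-body bound you invoke.} That inequality says $\E_g(\psi)\geq C\int\rho_\psi(x)^{5/3}\,\dd x$, an $L^{5/3}$ estimate on the one-body density; it says nothing about the pair density $\rho_\psi^{(2)}(x_1,x_2)$ near $x_1=x_2$. For states in the form domain of $\E_g$ with $\psi\not\equiv 0$ on a coincidence plane, $\phi=g\psi$ has a genuine $|x_i-x_j|^{-1}$ singularity, so $|\phi|^2$ is \emph{enhanced} at short distances, and the claim that $\int_{|x_i-x_j|<2\delta}|\phi|^2\lesssim\rho^2\delta^3$ uniformly over low-energy states is exactly what has to be proved, not assumed.

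The paper's route is quite different and does not use a Jastrow cutoff. The key observation is that on $[0,L]^{3N}$ one has $g(\vec x)=|x_i-x_j|^{-1}+R$ with $R$ of order $N^2/L\sim N^{5/3}\rho^{1/3}$, so $g$ is a huge constant plus a relatively small singular fluctuation (equivalently, the effective scattering length $a_{\rm eff}\sim -1/R\to 0$). The proof makes this precise by Dirichlet--Neumann bracketing into boxes of side $\ell$, bounding $g$ above and below by constants depending on the occupation numbers $\vec n$, and using a localized Hardy inequality (Lemma~\ref{thmHardyBallBox}) to compare the weighted and unweighted $L^2$ norms (Lemma~\ref{lemma53}, Proposition~\ref{prop:71}). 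A separate entropy bound restricts attention to states with $\E_g\lesssim N\ln N$, and an a priori bound on box occupations (Proposition~\ref{prop:nj}) controls the constants $K_\pm$, $V$. The Hardy step is where the short-distance behavior is actually tamed; this is the ingredient your Jastrow argument is missing.
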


We shall actually prove a stronger result below, namely a lower bound on $F_g(\beta,N,L)$ for finite $N$ which agrees with the corresponding expression for non-interacting particles, $F(\beta,N,L)$, to leading order in $N$, with explicit bounds on the correction term. Note that the corresponding upper bound is trivial, since for functions $\phi \in C_0^\infty((\R^{3}\times\{1,\dots,q\})^N)$
\begin{equation}
\E_g( \phi/ g) = \sum_{i=1}^N \int |\nabla_i  \phi(\vec y)|^2 \, \dd \vec y
\end{equation}
and hence $F_g(\beta,N,L)\leq F(\beta,N,L)$. Moreover, as already noted above one has  $F_g(\beta,N,L)= F(\beta,N,L)$ for $q=1$, since functions in $\mathcal{A}_1^N$ vanish whenever $x_i=x_j$ for some $i\neq j$. Hence it suffices to consider the case $q\geq 2$.

Theorem~\ref{thmAll} also holds true for the ground state energy, i.e., $\beta=\infty$, where $f(\infty,\rho) = \frac 35 (6\pi^2/q)^{2/3} \rho^{5/3}$. The proof of the equality \eqref{eqthm} in  this case is actually substantially easier, as the analysis of the entropy in Section~\ref{sec:entropy} is not needed.

Intuitively, the result in Theorem~\ref{thmAll} can be explained via a comparison of \eqref{defeg} with \eqref{qf2}. Effectively, the scattering process between two particles, $i$ and $j$, say, corresponds to a non-zero scattering length of the form
\begin{equation}
-\frac 1{a_{\rm eff}} = \sum_{\{k,l\}\neq \{i,j\}} \frac 1{|x_k-x_l|}\,.
\end{equation}
In the limit of large particle number, the sum of these other terms diverges, corresponding to an effective scattering length zero, i.e., no interactions. 

A minor modification of the proof shows that Theorem~\ref{thmAll} also holds for a model where the function $1/|x|$ in \eqref{eqG} is replaced by $1/|x| - 1/a$ for $a\leq 0$, corresponding to a two-body interaction with negative scattering length $a$. This only increases the effective scattering length $a_{\rm eff}$.

From Theorem~\ref{thmAll} we conclude that the model \eqref{defeg} is not suitable to describe a gas of fermions with point interactions, as it becomes trivial in the thermodynamic limit. No non-trivial models that are proven to be stable for arbitrary particle number exist to this date, however. Such non-trivial models are not expected to be given by a Dirichlet form of the type \eqref{defeg}, since such forms are naturally well-defined even in the bosonic case, where point-interaction models are known to become unstable due to the Thomas effect \cite{braaten,Tamura1991,Yafaev74,Thomas1935,correggi,efimov}.

In the remainder of this paper, we shall give the proof of Theorem~\ref{thmAll}. We start with a short outline of the main steps in the next section.

\section{Outline of the proof}

In the first step in Section~\ref{secBoundaryLocalized} we shall localize particles in small boxes. This part of the  Dirichlet--Neumann bracketing technique is quite standard, but it does not directly allow us to reduce the problem to fewer particles, as the interactions depend on the location of all the particles, including the ones in different boxes. Still this step allows us to compare our model with the corresponding one for non-interacting fermions, by utilizing a suitable version of the Hardy inequality to quantify the effect of the deviation of the weight function $g$ in \eqref{eqG} from being a constant. This analysis is done in Section~\ref{sec:enno}. Note that the relevant constant to compare $g$ with depends on the distribution of the particles in the various boxes, hence the importance of the first step. An important point in the analysis is a control on the particle number distribution, which is obtained in Prop.~\ref{prop:nj}. 

In Section~\ref{sec:entropy} we shall give a rough bound on the entropy for large energy, which will allow us to conclude that that to compute the free energy \eqref{eqFreeEnergy}, it suffices to consider only states with energy $E\lesssim N \ln N$. We do this by applying the localization technique to very small boxes, with side length decreasing with  energy, in order to have to consider  effectively only the ground states in each small box.

In the low energy sector, corresponding to energies $E\lesssim N \ln N$, our bounds in Section~\ref{sec:enno} allow to make a direct comparison of our model with non-interacting fermions. This comparison is detailed in Section~\ref{sec:low}. For this purpose, we shall choose much larger boxes than in the previous step, very slowly increasing to infinity with $N$ in order for finite size effects to vanish in the thermodynamic limit. Finally, Section~\ref{secFreeEnergyConvergence} collects all the results in the previous sections to give the proof of Theorem~\ref{thmAll}.

Throughout the proof, we shall use the letter $c$ for universal constants independent of all parameters, even though $c$ might have different values at different occurrences. Similarly, we use $c_\eta$ for functions of 
$
\eta = \beta\rho^{2/3} 
$
 that are uniformly bounded for $\eta>\ve$ for any $\ve>0$. Note that the free energy for noninteracting particles in \eqref{fni} satisfies the scaling relation
 \begin{equation}
f(\beta,\rho)=\rho^{5/3} f(\eta,1) \ , \qquad \eta = \beta\rho^{2/3} \,,
\end{equation}
and $\eta\to \infty$ corresponds to the zero-temperature limit.

\section{Particle localization in small boxes}
\label{secBoundaryLocalized}

Given an integer $m\geq 2$, we shall divide the cube $[0,L]^3$ into $M= m^3$ disjoint cubes of side length $\ell = L/m$, denoted by  $\{B_i\}_{i=1}^{M}$. In order to obtain a lower bound on $\E_g$, we  
 introduce Neumann boundary conditions on the boundary of each box $B_i$. 
 
 Specifically, given a vector $\vn = \{ n_1,\dots ,n_M\}$ of nonnegative integers with $\sum_{j=1}^M n_j = N$, let $\mathcal{B}_{\rm sym}(\vn) $ denote the subset of $[0,L]^{3N}$ where exactly $n_j$ particles are in $B_j$, for all $1\leq j \leq M$. More precisely, if 
 \begin{equation}\label{def:tib}
{\mathcal{B}}(\vn) = B_1^{n_1} \times \cdots \times B_M^{n_M}
\end{equation}
and, for general $A \subset \R^{3N}$ and $\pi \in S^N$ (the permutation group of $N$ elements)
\begin{equation}
\pi(A) = \{ \vec x \, : \, \pi^{-1}(\vec x) \in A\} \ , \qquad \pi(\vec x) : = (x_{\pi(1)},\dots, x_{\pi(N)}) 
\end{equation}
we have 
\begin{equation}
\mathcal{B}_{\rm sym}(\vn)  = \bigcup_{\pi\in S^N} \pi( {\mathcal{B}}(\vn) )
\end{equation}
 Then clearly 
 \begin{equation}
1 = \sum_{\vn} \chi_{\mathcal{B}_{\rm sym}(\vn)}(\vec x)
\end{equation}
for almost every $\vec x \in [0,L]^{3N}$. 
Correspondingly one can write for any $\psi \in \mathcal{A}_q^N$  supported in $[0,L]^{3N}$
\begin{gather*}
\psi(\vec y) = \sum_{\vn} \chi_{\mathcal{B}_{\rm sym}(\vn)}(\vec x)   \psi(\vec y)  \eqqcolon \sum_{\vn}   \psi^{\vn}(\vec y)\,.\numberthis \label{rso}
\end{gather*}
Note that each $\psi^{\vn}$ is a function in $\mathcal{A}_q^N$ with the property that it is non-zero only if exactly $n_j$ particles are in $B_j$ for any $1\leq j\leq M$. In particular, the functions appearing in the decomposition on the right side of \eqref{rso} all have disjoint support.

Conversely, given a set of functions $\psi^{\vn} \in \mathcal{A}_q^N$ supported in $\mathcal{B}_{\rm sym}(\vn)$, we can define $\psi\in\mathcal{A}_q^N$ via \eqref{rso}. Hence there is a one-to-one correspondence between functions in $\mathcal{A}_q^N$ and sets of functions $\psi^{\vn}$. We now redefine our energy functional $\E_g$ as 
\begin{equation}\label{redef}
\Egl(\psi) =  \sum_{\vn}   \sum_{i=1}^N \int_{\mathcal{B}_{\rm sym}(\vn)} g(\vec x)^2 | \nabla_i \psi^{\vn}(\vec y)|^2 \dd \vec y 
\end{equation}
This coincides with the definition \eqref{defeg} in case $\psi \in H^1((\R^{3}\times\{1,\dots,q\})^N, g(\vec x)^2 \dd \vec x)$, but is more general since it allows for wave functions that are discontinuous at the boundaries of the $B_j$, effectively introducing Neumann boundary conditions there.

Note that with the definition \eqref{redef} above, we have 
\begin{gather*}
\Egl(\psi) = \sum_{\vn} \Egl(\psi^{\vn })  \quad \text {for $\psi = \sum_{\vn} \psi^{\vn}$} \label{eqSplit1} \numberthis
\end{gather*}
In particular, the corresponding operator is diagonal with respect to the direct sum decomposition of $\mathcal{A}_q^N$ into functions supported on $\mathcal{B}_{\rm sym}(\vn)$, and hence the min-max principle implies the bound
\begin{equation}\label{48}
\sup_{\substack{\{\psi_k\} \\ \braket{\psi_i}{\psi_j}_g = \delta_{ij} }} \sum_k  e^{- \beta \E_g(\psi_k) }  \leq \sum_{\vn} \sup_{\substack{\{\psi^{\vn}_k\} \\ \braket{\psi^{\vn}_i}{\psi^{\vn}_j}_g = \delta_{ij} }} \sum_k  e^{- \beta \Egl(\psi^{\vn}_k) } 
\end{equation}
where on the right side it is understood that each $\psi_j^{\vn}$ is supported in $\mathcal{B}_{\rm sym}(\vn)$.

As a final step in this section we want to simplify the problem by getting rid of the antisymmetry requirement for particles localized in different boxes. There exists a simple isometry between functions $\psi^{\vn}$ in $\mathcal{A}_q^{N}$ and functions whose support is on the smaller set ${\mathcal{B}}(\vn) $ in \eqref{def:tib}, 
where $x_1,\dots, x_{n_1} \in B_1$, $x_{n_1+1},\dots,x_{n_1+n_2} \in B_2$, etc., and which are antisymmetric only with respect to permutations of the $y_i$ corresponding to $x_i$ in the same box. This isometry is simply
\begin{equation}
\psi^{\vn} \mapsto     \left( \frac{ N!}{ \prod_{j=1}^M n_j! } \right)^{1/2}      \chi_{{\mathcal{B}}(\vn)} \psi^{\vn}
\end{equation}
Note that the normalization factor is chosen such that both sides have the same norm, and the left side can be obtained from the right by a suitable antisymmetrization over all variables $y_i$. Moreover, both functions yield the same value when plugged into $\Egl$. Let $\mathcal{A}_q^{N,\ell}(\vn)$ denote the set $\{ \chi_{\mathcal{B}(\vn)} \psi \, : \, \psi \in \mathcal{A}_q^N\}$, i.e., functions supported in ${\mathcal{B}}(\vn)$ that are antisymmetric in the variables corresponding to the same box. 
The bound \eqref{48} and the above observation imply that 
\begin{equation}\label{ja48}
F_{g}(\beta,N,L) \geq  - T \ln   \sum_{\vn} \sup_{\substack{\{\psi_k \in \mathcal{A}_q^{N,\ell}(\vn) \} \\ \braket{\psi_i}{\psi_j}_g = \delta_{ij} }} \sum_k  e^{- \beta \Egl(\psi_k) } 
\end{equation}

\section{Energy and norm bounds}\label{sec:enno} 

Our goal in this next step to derive a lower bound on $\Egl( \psi)$ for $\psi\in \mathcal{A}_q^{N,\ell}(\vn)$, i.e., functions  supported in ${\mathcal{B}}(\vn)$, and to compare the norm of such a $\psi$ with the standard, unweighted $L^2$ norm. For this purpose, we shall need a certain version of the Hardy inequality, which will be derived in the next subsection.

\subsection{Hardy inequalities}
\label{secHardy}

Recall the usual Hardy inequality 
\begin{gather}
\int_{\R^3} |\nabla f(x)|^2 \dd x \ge \frac {1} 4 \int_{\R^3} \frac{|f(x)|^2}{|x|^2} \dd x
\label{eqHardy}
\end{gather}
for functions $f\in \dot H^1(\R^3)$. We shall need a local version of (\ref{eqHardy}) on balls.

\begin{lemma}
\label{thmHardyBall}
Let $B_\ell \subset \R^3$ denote the open centered ball with radius $\ell$. For any $f \in H^1(B_\ell)$ 
\begin{gather}
2 \int_{B_\ell} |\nabla f(x)|^2 \dd x + \frac 9{2 \ell^2} \int_{B_\ell} |f(x)|^2 \dd x\ge  \frac 14 \int_{B_\ell} \frac {|f(x)|^2}{|x|^2} \dd x
  \label{eqHardyBall}
\end{gather}
\end{lemma}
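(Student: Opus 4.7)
The plan is to deduce the local inequality from the standard completing-the-squares proof of \eqref{eqHardy}, adapted so that the boundary contribution at $\partial B_\ell$ disappears. Recall that the usual Hardy inequality in $\R^3$ is a consequence of the pointwise estimate $0\leq |\nabla f + \vec A f|^2$ with $\vec A(x)=\tfrac{x}{2|x|^2}$, since after expanding and integrating by parts one uses $\nabla\cdot \vec A = \tfrac{1}{2|x|^2}$ together with $|\vec A|^2=\tfrac{1}{4|x|^2}$ to obtain $\tfrac14\int |f|^2/|x|^2$. On a ball, the integration by parts produces an extra term $\int_{\partial B_\ell}\vec A\cdot\vec n\,|f|^2$, which does not have a sign and is the source of all the difficulty.

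The idea is to \emph{modify} the vector field so that its normal component vanishes on $\partial B_\ell$. The natural candidate is
\begin{equation*}
\vec A(x) = \frac{x}{2|x|^2} - \frac{x}{2\ell^2},
\end{equation*}
which satisfies $\vec A\cdot\vec n = \tfrac{1}{2\ell}-\tfrac{1}{2\ell}=0$ on $\partial B_\ell$ (here $\vec n = x/\ell$). Starting from $0\leq \int_{B_\ell} |\nabla f + \vec A f|^2$, expanding the square and using that $2\Re(\vec A\cdot \bar f\nabla f) = \vec A\cdot \nabla |f|^2$, the divergence theorem with vanishing boundary term yields
\begin{equation*}
0 \leq \int_{B_\ell}|\nabla f|^2\,\dd x + \int_{B_\ell}\bigl(|\vec A|^2 - \nabla\cdot \vec A\bigr)|f|^2\,\dd x .
\end{equation*}

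A direct computation using $\nabla\cdot(x/|x|^2)=1/|x|^2$ and $\nabla\cdot x = 3$ gives
\begin{equation*}
\nabla\cdot\vec A = \frac{1}{2|x|^2}-\frac{3}{2\ell^2},\qquad |\vec A|^2 = \frac{1}{4|x|^2}-\frac{1}{2\ell^2}+\frac{|x|^2}{4\ell^4},
\end{equation*}
so that $|\vec A|^2-\nabla\cdot\vec A = -\tfrac{1}{4|x|^2} + \tfrac{1}{\ell^2} + \tfrac{|x|^2}{4\ell^4}$. Plugging this in and using $|x|\leq \ell$ to bound $\tfrac{|x|^2}{4\ell^4}\leq \tfrac{1}{4\ell^2}$ produces
\begin{equation*}
\frac{1}{4}\int_{B_\ell}\frac{|f|^2}{|x|^2}\,\dd x \leq \int_{B_\ell}|\nabla f|^2\,\dd x + \frac{5}{4\ell^2}\int_{B_\ell}|f|^2\,\dd x,
\end{equation*}
which is already stronger than \eqref{eqHardyBall}; the looser constants $2$ and $\tfrac{9}{2}$ in the statement follow a fortiori.

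There is no real obstacle once the vector field $\vec A$ is chosen with the boundary-annihilating correction $-x/(2\ell^2)$; everything else is a pointwise square expansion plus an integration by parts with a vanishing surface integral. The only thing to be careful about is the sign bookkeeping in $\int \vec A\cdot\nabla|f|^2 = -\int(\nabla\cdot\vec A)|f|^2$ and the estimate $|x|^2/\ell^4\leq 1/\ell^2$ on $B_\ell$ used to absorb the last quadratic-in-$|x|$ term.
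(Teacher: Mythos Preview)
Your proof is correct and, in fact, yields sharper constants than the paper's. The approaches differ genuinely. The paper multiplies $f$ by the cutoff $[1-|x|/\ell]_+$ so that the product lies in $\dot H^1(\R^3)$, applies the global Hardy inequality \eqref{eqHardy} to this product, and then uses Cauchy--Schwarz with two auxiliary parameters $\varepsilon,\delta$ to disentangle the cross terms; optimizing these parameters produces the constants $2$ and $9/2$. Your route is the ground-state/completing-the-square argument with the deformed drift $\vec A = \tfrac{x}{2|x|^2}-\tfrac{x}{2\ell^2}$, engineered so that $\vec A\cdot\vec n=0$ on $\partial B_\ell$; the boundary term is killed exactly rather than absorbed, and you obtain the stronger bound
\[
\frac14\int_{B_\ell}\frac{|f|^2}{|x|^2}\,\dd x \le \int_{B_\ell}|\nabla f|^2\,\dd x + \frac{5}{4\ell^2}\int_{B_\ell}|f|^2\,\dd x.
\]
The advantage of your method is that it avoids any loss from Cauchy--Schwarz and is essentially a one-line computation once $\vec A$ is written down; the paper's cutoff method is a bit more robust in situations where no explicit solution of the boundary condition $\vec A\cdot\vec n=0$ is available. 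One small point you glossed over: the integration by parts has a second potential boundary contribution from the singularity of $\vec A$ at $x=0$. This is handled in the standard way by first taking $f\in C^\infty(\overline{B_\ell})$, excising a ball of radius $\epsilon$, noting that the inner boundary term is $O(\epsilon)$ in three dimensions, and then passing to general $f\in H^1(B_\ell)$ by density; it would be worth a sentence.
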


\begin{proof}
We apply the Hardy inequality \eqref{eqHardy} to the function $h(x) = f(x) [1 - |x|/\ell]_+$, where $[\,\cdot\,]_+$ denotes the positive part. For the right side of \eqref{eqHardy} we obtain
\begin{gather*}
\frac 1 4 \int_{B_\ell} \frac {|h(x)|^2}{|x|^2} \dd x = \frac{1} 4 \int_{B_\ell} \frac {|f(x)|^2}{|x|^2}\left( 1 - \frac{2|x|}{\ell} + \frac{|x|^2}{\ell^2}\right) \dd x \\
\ge \frac{1-\ve}4  \int_{B_\ell} \frac {|f(x)|^2}{|x|^2} \dd x  - \frac {1-\ve}{4 \ve \ell^2}   \int_{B_\ell} |f(x)|^2 \dd x \numberthis
\end{gather*}
for any $\ve>0$. For the left side of \eqref{eqHardy} a simple Schwarz inequality yields
\begin{equation}
\int_{B_\ell} |\nabla h(x)|^2 \dd x \le (1+\delta)  \int_{B_\ell} |\nabla f(x)|^2 \dd x + \frac {1+\delta}{\delta \ell^2} \int_{B_\ell} |f(x)|^2 \dd x  
\end{equation}
for $\delta > 0$. 
In combination we obtain the desired inequality \eqref{eqHardyBall} by choosing $\ve=1/6$ and $\delta=2/3$.
\end{proof}

For later use we need a version of Lemma \ref{thmHardyBall} on cubes with arbitrary location relative to the singularity. 

\begin{lemma}
\label{thmHardyBallBox}
Let $C_\ell = [0,\ell]^3$. For any $y\in\R^3$ and  any $f \in H^1(C_\ell)$,
\begin{gather}
c_0 \int_{C_\ell} |\nabla f(x)|^2 \dd x + \frac{c_1 }{ \ell^{2}} \int_{C_\ell} |f(x)|^2 \dd x\ge \frac 14\int_{C_\ell} \frac {|f(x)|^2}{|x-y|^2} \dd x
\label{eqHardyBallBox}
\end{gather}
with $c_0\leq 16$ and $c_1\leq 144$.
\end{lemma}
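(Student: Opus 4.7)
The plan is to reduce to Lemma \ref{thmHardyBall} by extending $f$ from $C_\ell$ via even reflection across suitably chosen faces, then applying the ball version on $B_{\ell/2}(y)$.

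First, I would dispose of the far case: if $y \notin [-\ell/2, 3\ell/2]^3$, then some coordinate of $y$ lies at distance more than $\ell/2$ from $[0,\ell]$, so $|x-y|^2 \geq \ell^2/4$ throughout $C_\ell$ and \eqref{eqHardyBallBox} follows with $c_0=0$ and $c_1=1$.

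For $y \in [-\ell/2, 3\ell/2]^3$ I would construct the extension direction-by-direction: in each coordinate $i\in\{1,2,3\}$, reflect $C_\ell$ across $\{x_i=0\}$ if $y_i \leq \ell/2$, and across $\{x_i=\ell\}$ otherwise. This produces $\tilde f \in H^1(D)$ with $\tilde f|_{C_\ell}=f$, where $D$ is a cube of side $2\ell$ consisting of $C_\ell$ and its seven reflected copies; the reflection identities give $\|\tilde f\|_{L^2(D)}^2 = 8 \|f\|_{L^2(C_\ell)}^2$ and the analogous identity for the gradient. By the choice of reflection directions, $y$ lies at distance at least $\ell/2$ from $\partial D$, so $B_{\ell/2}(y) \subset D$. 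Applying Lemma \ref{thmHardyBall} rescaled to radius $\ell/2$ to $\tilde f|_{B_{\ell/2}(y)}$, and bounding its $L^2$ and $H^1$ norms over $B_{\ell/2}(y)$ by those over $D$, yields
$$ \tfrac14 \int_{B_{\ell/2}(y)} \frac{|\tilde f|^2}{|x-y|^2}\dd x \leq 2 \int_{B_{\ell/2}(y)} |\nabla \tilde f|^2 \dd x + \frac{18}{\ell^2} \int_{B_{\ell/2}(y)} |\tilde f|^2\dd x \leq 16 \int_{C_\ell} |\nabla f|^2 \dd x + \frac{144}{\ell^2} \int_{C_\ell} |f|^2 \dd x. $$
Since $|x-y|\geq \ell/2$ on $C_\ell \setminus B_{\ell/2}(y)$, the remaining piece contributes at most an additional $\ell^{-2} \int_{C_\ell}|f|^2 \dd x$, and combining the two estimates gives \eqref{eqHardyBallBox} with constants of the claimed order.

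The main obstacle is the geometric trade-off: a larger ball (radius $\geq \ell\sqrt{3}$) would contain all of $C_\ell$ and avoid the complement estimate, but would require reflecting into many more sub-cubes, inflating $c_0$ well beyond $16$; a smaller ball leaves too much to bound trivially. The choice $R=\ell/2$ together with the coordinate-adapted choice of reflection direction, which keeps the extension factor at $2^3=8$ rather than the naive $3^3=27$, is the step that must be executed carefully to obtain the stated constants on both $c_0$ and $c_1$.
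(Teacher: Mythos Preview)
Your argument is essentially the paper's: extend $f$ by even reflection, apply Lemma~\ref{thmHardyBall} on $B_{\ell/2}(y)$, and handle $C_\ell\setminus B_{\ell/2}(y)$ by the trivial bound $|x-y|\ge \ell/2$. The only place your variant falls short of the stated constants is $c_1$. The paper first replaces $y$ by its nearest point in $C_\ell$ and then uses the geometric fact that, with $y\in C_\ell$, reflecting any point of $B_{\ell/2}(y)\setminus C_\ell$ back into $C_\ell$ keeps it inside $B_{\ell/2}(y)$; hence the ball contribution is bounded by $8$ times the integrals over $C_\ell\cap B_{\ell/2}(y)$ only, and since this set is disjoint from $C_\ell\setminus B_{\ell/2}(y)$ the coefficients $144/\ell^2$ and $1/\ell^2$ do not add, giving $c_1\le 144$. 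Your cruder estimate $\int_{B_{\ell/2}(y)}|\tilde f|^2\le\int_D|\tilde f|^2=8\int_{C_\ell}|f|^2$ is over all of $C_\ell$, so after adding the complement term you obtain only $c_1\le 145$; the paper's sharper step is not available in your setup because for $y\notin C_\ell$ the reflection can increase $|x-y|$.
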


The stated bounds on the constants $c_0$ and $c_1$ are presumably far from optimal, but suffice for our purpose.

\begin{proof} 
If $y\not\in C_\ell$, we can replace it by the point in $C_\ell$ closest to $y$. This can only increase the right side. Hence we may assume that 
$y\in C_\ell$. Let $B$ denote the ball of radius $\ell/2$ around $y$. Then
\begin{equation}\label{neweq}
\frac 14 \int_{C_\ell\setminus B} \frac {|f(x)|^2}{|x-y|^2} \dd x \leq \frac 1{\ell^2} \int_{C_\ell\setminus B}  |f(x)|^2 \,\dd x
\end{equation}
Define a function $\tilde f$ by extending $f$ to $[-\ell,2\ell]^3$ as 
\begin{gather}
\tilde f(x_1,x_2,x_3) = f(\tau(x_1), \tau(x_2), \tau(x_3))
\end{gather}
where
\begin{gather}
 \tau(x) \coloneqq \begin{cases} -x & x \in [-\ell,0] \\ x & x \in[0,\ell] \\ 2\ell-x & x \in[\ell,2\ell]
\end{cases}
\label{eqTau}
\end{gather}
Then $\tilde f \in H^1([-\ell,2\ell]^3)$. Since  $B\subset [-\ell,2\ell]^3$, we get with the aid of 
 the Hardy inequality \eqref{eqHardyBall} on $B$ (with $\ell/2$ in place of $\ell$)
\begin{align*}
&\frac 14 \int_{C_\ell \cap B} \frac {|f(x)|^2}{|x-y|^2} \dd x  \le \frac 14 \int_{B} \frac {|\tilde f(x)|^2}{|x-y|^2}\dd x \\ & \le 2 \int_{B} |\nabla \tilde f(x)|^2 \dd x +\frac {18}{ \ell^{2}} \int_{B} |\tilde f(x)|^2 \dd x \\ 
&\le 8 \lk ( 2 \int_{C_\ell \cap B} |\nabla  f(x)|^2 \dd x + \frac{18}{ \ell^{2}} \int_{C_\ell \cap B} | f(x)|^2 \dd x  \rk )  \numberthis
\label{eqProofBallBox}
\end{align*}
In the last step, we used that $B$ intersects, besides $C_\ell$, at most $7$ other translates of $C_\ell$, and that the intersection of $B$ with these translates are, when reflected back to $C_\ell$, contained in $C_\ell \cap B$ (see Fig.~\ref{figBB1}). In combination, \eqref{neweq} and \eqref{eqProofBallBox} imply \eqref{eqHardyBallBox}.
\end{proof}

\begin{figure}[htp]
\begin{center}
\includegraphics[width=8cm]{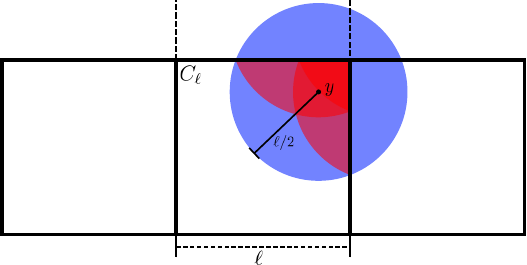}
  \caption{Two-dimensional illustration of the reflection  technique used in the proof of Lemma \ref{thmHardyBallBox}. The box $C_\ell$ and two  of its neighbor boxes are shown, as well as the ball $B$ around $y\in C_\ell$.  Using the extended function $\tilde f$ we can mirror  $C_\ell \setminus B$ back into  $C_\ell \cap B$. There are at most $8$ reflected components in three dimensions, the worst case being if the ball $B$ intersects with a corner of $C_\ell$.}
  \label{figBB1}
\end{center}
\end{figure}

\subsection{A lower bound on $\Egl$}

Let $\psi$ be an $L^2((\R^3\times \{1,\dots, q\})^N, g(\vec x)^2 \dd \vec y)$-normalized function in $\mathcal{A}_q^{N,\ell}(\vn)$, 
 defined just above  \eqref{ja48}.  Let $d_{jk}$ denote the distance between boxes $B_j$ and $B_k$. 
For $\vec x \in \mathcal{B}(\vn)$, we can bound 
\begin{equation}\label{lbg}
g(\vec x) \geq \sum_{1\leq j<k\leq M} \frac {n_j n_k} { d_{jk} + 2 \sqrt{3} \ell} + \sum_{j=1}^M \frac {n_j (n_j-1)} { 2\sqrt{3} \ell} \geq K_- + \frac V {4\sqrt{3} \ell}
\end{equation}
where
\begin{equation}\label{def:kmv}
K_- = \sum_{\substack{1\leq j<k\leq M\\d_{jk}>0}} \frac {n_j n_k} { d_{jk} + 2 \sqrt{3} \ell} \quad \text{and} \quad V = \sum_{j=1}^M n_j (n_j +m_j - 1)
\end{equation}
Here $m_k$ denotes the total number of particles in the $26$ neighboring boxes of $B_k$.
The bound \eqref{lbg} immediately leads to the lower bound
\begin{equation}\label{lbegl}
\Egl(\psi) \geq \left( K_- + \frac V {4\sqrt{3} \ell} \right)^{2}  \E^\ell(\psi)
\end{equation}
for $\psi\in \mathcal{A}_q^{N,\ell}(\vn)$, 
where $\E^\ell$ on the right side stands for the energy functional for noninteracting particles, corresponding to $g\equiv 1$ in \eqref{redef}.

\subsection{Bounds on norms}\label{ss:norms}

In the following, it will be necessary to compare the norm $\| \, \cdot \, \|_g = \langle \, \cdot\, | \,\cdot\, \rangle_g^{1/2}$ with the standard $L^2$ norm $\|\,\cdot\,\|$ without weight. For  $\psi\in \mathcal{A}_q^{N,\ell}(\vn)$,  the bound \eqref{lbg} immediately implies the lower bound
 \begin{equation}
\| \psi\|_g  \geq \left( K_- + \frac V {4\sqrt{3} \ell} \right)  \| \psi\| 
\end{equation}
To obtain a corresponding upper bound, we proceed as follows. 
For given $i$, corresponding to $x_i \in B_k$ for some box $B_k$, let $\mathcal{N}[i]$ be the set of $j$s  with $j\neq i$ such that $x_j$ is either in the same box $B_k$ or in one of the $26$ neighboring boxes touching $B_k$. With $m_k$ as defined above, $|\mathcal{N}[i]|= n_k+m_k-1$ for $x_i\in B_k$. 
Then
\begin{equation}\label{ubg}
g(\vec x) \leq    \frac 12 \sum_{i=1}^N   \sum_{ j \in \mathcal{N}[i] }  \frac 1 {|x_i-x_j|}  + K_+ \quad \text{with} \quad K_+ = \sum_{\substack{1\leq j<k\leq M\\d_{jk}>0}} \frac {n_j n_k} { d_{jk} }  
\end{equation}
for $\vec x \in \mathcal{B}(\vn)$. 
The Cauchy-Schwarz inequality  implies
\begin{equation}
\|\psi\|_g^2 \leq (1+\ve) K_+^2 \| \psi\|^2  +   \left( 1 + \ve^{-1}\right)  \frac V4 \sum_{i=1}^N  \sum_{ j \in \mathcal{N}[i] }   \int \frac{|\psi(\vec y)|^2}{|x_i-x_j|^2} \dd\vec y   
\end{equation}
for any $\ve>0$, where $V$ is defined in \eqref{def:kmv}. 
In the last term, we use the Hardy inequality \eqref{eqHardyBallBox} for the integration over $x_i$, and obtain
\begin{align}\nonumber
\|\psi\|_g^2 & \leq \left[(1+\ve) K_+^2 + \frac {c_1}{\ell^2} \left( 1 + \ve^{-1} \right) V^2  \right] \| \psi\|^2    \\ & \quad +   \left( 1 + \ve^{-1}\right)   c_0  V \sum_{i=1}^N  \left|\mathcal{N}[i] \right| \int |\nabla_i \psi(\vec y)|^2\, \dd\vec y   
\end{align}
If we reinsert $g(\vec x)^2$ into the last integrand using \eqref{lbg}, we thus obtain the following lemma.

\begin{lemma}\label{lemma53}
For $\psi \in \mathcal{A}_q^{N,\ell}(\vn)$, 
we have the bounds
\begin{align}\nonumber 
\left(  K_- + \frac V {4\sqrt{3} \ell}\right)^2 \|\psi\|^2 \leq \|\psi\|_g^2 & \leq (1+\ve)    \left[ K_+^2 + \frac {c_1}{\ve \ell^2} V^2  \right] \| \psi\|^2  \\ & \quad  +      \frac{(1+\ve^{-1} )c_0 V}{ \left(  K_- + \frac V {4\sqrt{3} \ell}\right)^2}   \sum_{i=1}^N  \left|\mathcal{N}[i] \right| \int |\nabla_i \psi(\vec y)|^2 g(\vec x)^2\, \dd\vec y    \label{lbng}
\end{align}
for any $\ve>0$, 
where $K_\pm$ and $V$ are defined in \eqref{def:kmv} and \eqref{ubg}, respectively.
\end{lemma}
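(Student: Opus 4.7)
The lower bound is immediate from \eqref{lbg}: since $\psi$ is supported in $\mathcal{B}(\vn)$, the pointwise estimate $g(\vec x) \geq K_- + V/(4\sqrt{3}\ell)$ holds on the support of $\psi$, and multiplying by $|\psi|^2$ and integrating gives the stated bound on $\|\psi\|_g^2$.

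For the upper bound I would start from the companion pointwise bound \eqref{ubg}, which decomposes $g$ into a ``long-range'' contribution $K_+$ and a short-range pair sum $A \coloneqq \tfrac12 \sum_{i=1}^N \sum_{j\in\mathcal{N}[i]} |x_i-x_j|^{-1}$ over particles lying in the same or an adjacent box. The elementary inequality $(a+b)^2 \leq (1+\ve) a^2 + (1+\ve^{-1}) b^2$ then yields $g^2 \leq (1+\ve) K_+^2 + (1+\ve^{-1}) A^2$. A Cauchy-Schwarz applied to $A$, combined with the counting identity $\sum_i |\mathcal{N}[i]| = \sum_k n_k(n_k+m_k-1) = V$, gives $A^2 \leq \tfrac{V}{4} \sum_i \sum_{j\in\mathcal{N}[i]} |x_i-x_j|^{-2}$. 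Multiplying by $|\psi|^2$ and integrating thus splits $\|\psi\|_g^2$ into a harmless $(1+\ve) K_+^2 \|\psi\|^2$ piece plus a sum of weighted integrals $\int |\psi|^2/|x_i-x_j|^2 \, \dd\vec y$.

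For each pair $(i,j)$ with $j \in \mathcal{N}[i]$, the variable $x_i$ is confined to a single cube of side $\ell$, so Lemma~\ref{thmHardyBallBox} applies to the $x_i$-integration with $y = x_j$, treating $x_j$ as a parameter. This bounds the integrand by $4 c_0 \int |\nabla_i \psi|^2 \, \dd \vec y + (4 c_1/\ell^2) \int |\psi|^2 \, \dd \vec y$ after integrating out the remaining variables. Summing over $i$ and $j\in\mathcal{N}[i]$ and invoking $\sum_i |\mathcal{N}[i]| = V$ once more in the $|\psi|^2$ contribution produces, together with the $K_+^2$ term, the expression $(1+\ve) K_+^2 \|\psi\|^2 + (1+\ve^{-1}) c_1 V^2 \ell^{-2} \|\psi\|^2$; using $1+\ve^{-1} = (1+\ve)/\ve$ collects this into $(1+\ve)[K_+^2 + c_1 V^2/(\ve \ell^2)] \|\psi\|^2$, while the gradient term reads $(1+\ve^{-1}) c_0 V \sum_i |\mathcal{N}[i]| \int |\nabla_i\psi|^2 \, \dd\vec y$.

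The final step is to replace the plain $L^2$ gradient by the weighted one appearing in \eqref{lbng}. I would reinsert $g^2$ using \eqref{lbg} in the reversed form $1 \leq g(\vec x)^2 / (K_- + V/(4\sqrt{3}\ell))^2$ on $\mathcal{B}(\vn)$, which produces the prefactor $(K_- + V/(4\sqrt{3}\ell))^{-2}$ in front of $\sum_i |\mathcal{N}[i]| \int |\nabla_i\psi|^2 g^2 \, \dd\vec y$. Beyond the two ingredients already in place --- the pointwise sandwich \eqref{lbg}, \eqref{ubg} and the cube Hardy inequality of Lemma~\ref{thmHardyBallBox} --- the argument is essentially bookkeeping. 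The only delicate points are the two uses of the counting identity $\sum_i |\mathcal{N}[i]| = V$ (first to extract one factor of $V$ from the squared sum via Cauchy-Schwarz, then to collect the $|\psi|^2$ terms produced by Hardy) and the careful tracking of the $(1+\ve)$, $(1+\ve^{-1})$ constants so that they recombine into the precise prefactor displayed in the lemma.
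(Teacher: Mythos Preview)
Your proposal is correct and follows essentially the same route as the paper: the lower bound comes directly from \eqref{lbg}, and the upper bound is obtained by applying the splitting $(a+b)^2\le(1+\ve)a^2+(1+\ve^{-1})b^2$ to \eqref{ubg}, then Cauchy--Schwarz on the short-range sum (using $\sum_i|\mathcal{N}[i]|=V$), then the cube Hardy inequality of Lemma~\ref{thmHardyBallBox} in the $x_i$ variable, and finally reinserting $g^2$ via \eqref{lbg}. Your bookkeeping of the constants, including the rewriting $(1+\ve^{-1})=(1+\ve)/\ve$, matches the paper exactly.
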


\subsection{A bound on the number of particles in a box}

Let again $\psi$ be a wavefunction  in $\mathcal{A}_q^{N,\ell}(\vn)$ and let us assume it is normalized, i.e., $\|\psi\|_g = 1$.  We have the following  a priori bound. 

\begin{proposition}\label{prop:nj}
There exists a constant $\kappa>0$ 
 such that for any normalized $\psi\in \mathcal{A}_q^{N,\ell}(\vn)$ and any $\ell>0$ we have 
\begin{equation}\label{th}
\Egl(\psi) \geq \frac {\kappa}{q^{2/3} } \sum_{j=1}^M \frac { \left[ n_j - q\right]_+^{5/3}}{\ell^2}
\end{equation}
\end{proposition}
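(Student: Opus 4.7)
Plan: The idea is to combine the pointwise lower bound $g \ge C := K_- + V/(4\sqrt 3 \ell)$ from \eqref{lbg}, a Lieb--Thirring-type inequality for free fermions with Neumann boundary conditions confined to the product of cubes $\mathcal{B}(\vn)$, and the upper norm bound of Lemma~\ref{lemma53}, so as to reduce the problem to a statement about non-interacting fermions and then convert between the weighted and unweighted $L^2$ norms.

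First, \eqref{lbegl} directly gives $\Egl(\psi) \ge C^2\,\E^\ell(\psi)$, where $\E^\ell$ is the non-interacting kinetic-energy functional. Second, since $\psi \in \mathcal{A}_q^{N,\ell}(\vn)$ is antisymmetric in the $n_j$ spin-position variables associated with each box $B_j$, I apply a standard Neumann Lieb--Thirring bound in each box separately. Viewing $\psi$, for fixed positions of the particles outside $B_j$, as an antisymmetric function of $n_j$ spinor variables on $B_j$, the Neumann ground mode of $-\Delta$ is constant and so can accommodate $q$ fermions at zero energy, while the remaining $[n_j-q]_+$ particles must occupy higher Neumann eigenvalues whose sum is, by Weyl asymptotics, at least $c\,[n_j-q]_+^{5/3}/(q^{2/3}\ell^2)$. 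Integrating over the remaining variables and summing over $j$ gives
\[
\E^\ell(\psi) \;\ge\; \frac{c}{q^{2/3}} \sum_{j=1}^M \frac{[n_j-q]_+^{5/3}}{\ell^2} \,\|\psi\|^2 \;=:\; S\,\|\psi\|^2.
\]

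The final step is to upgrade the unweighted norm to $\|\psi\|_g^2 = 1$. Two elementary facts tame the coefficients in Lemma~\ref{lemma53}: every pair $(j,k)$ entering the sums in \eqref{def:kmv} and \eqref{ubg} has $d_{jk}\ge \ell$ (adjacent boxes have $d_{jk}=0$ and are excluded), so $K_+ \le (1+2\sqrt 3)\,K_- \le (1+2\sqrt 3)\,C$, and by the definition of $C$ also $V/\ell \le 4\sqrt 3\,C$. Taking $\ve=1$ this gives $A_1 := (1+\ve)[K_+^2 + c_1 V^2/(\ve\ell^2)] \le c\,C^2$. Combining $\Egl(\psi) \ge C^2 S \,\|\psi\|^2$ with the upper bound $\|\psi\|_g^2 \le A_1\|\psi\|^2 + A_2 \Egl(\psi)$ from Lemma~\ref{lemma53} and substituting $\|\psi\|^2 \le \Egl(\psi)/(C^2 S)$ yields the scalar inequality
\[
1 \;\le\; \Bigl(\frac{A_1}{C^2 S} + A_2\Bigr)\,\Egl(\psi),
\]
from which the desired lower bound $\Egl(\psi) \ge \kappa\,S$ follows for a suitable universal $\kappa>0$.

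The main technical obstacle lies in the last step, since the gradient-term coefficient $A_2 \le c\,\ell^2 |\mathcal{N}_{\max}|/V$ (using $C^2 \ge V^2/(48\ell^2)$) may, in dense configurations with $n_j \gg q$, exceed $A_1/(C^2 S)$; in that case the naive scalar inequality only produces the weaker estimate $\Egl(\psi) \gtrsim V/(\ell^2|\mathcal{N}_{\max}|)$. Reconciling this with the target scaling $S \sim \sum_j[n_j-q]_+^{5/3}/(q^{2/3}\ell^2)$ is the delicate point and will likely require either sharpening Step~1 by extracting extra kinetic energy from close pairs within a common box via the local Hardy inequality (Lemma~\ref{thmHardyBallBox}), or working directly with the transformed function $\phi=g\psi$ and exploiting that $g$ is harmonic away from the coincidence planes $\{x_i=x_j\}$.
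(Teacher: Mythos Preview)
Your proposal is incomplete, and the obstacle you identify in the last paragraph is genuine and fatal to this route. In the regime where a single box contains many particles (say $n_j \sim N$), one has $V \sim N^2$, $C \sim V/\ell$, and $|\mathcal{N}_{\max}| \sim N$, so the gradient-term coefficient in Lemma~\ref{lemma53} scales like $A_2 \sim \ell^2/N$, while $A_1/(C^2 S) \sim q^{2/3}\ell^2/N^{5/3}$. The scalar inequality then yields only $\Egl(\psi) \gtrsim N/\ell^2$, missing the target $N^{5/3}/(q^{2/3}\ell^2)$ by a factor $N^{2/3}$. The loss is structural: once you replace $g$ by the constant lower bound $C$ you have thrown away precisely the singular behaviour of $g$ near coincidence planes, and Lemma~\ref{lemma53} cannot recover it.

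The paper's proof avoids the norm conversion entirely. It invokes Lemma~3 of \cite{Frank}, which is a Lieb--Thirring-type bound \emph{for the weighted form itself}: for the particles in a single box $B_j$,
\[
\sum_{i\in A}\int_{B_j^{|A|}} g^2 |\nabla_i\psi|^2 \,d\vec y_A \;\ge\; \frac{\tilde\kappa}{\ell^2}\,[|A|-q]_+ \int_{B_j^{|A|}} g^2 |\psi|^2 \,d\vec y_A,
\]
so the weight $g^2$ appears on both sides and no comparison with the flat $L^2$ norm is needed. This already gives \eqref{th} with exponent $1$ instead of $5/3$. The exponent is then raised by subdividing $B_j$ into $\mu^3$ sub-cubes of side $\ell/\mu$, applying the same lemma at the smaller scale (which is legitimate because $g(\lambda\vec x)=\lambda^{-1}g(\vec x)$ is homogeneous), and optimizing over $\mu$. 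Neither of your suggested fixes---a local Hardy correction or passing to $\phi=g\psi$---reproduces this; the essential missing ingredient is the weighted Lieb--Thirring estimate from \cite{Frank}, whose proof uses the superharmonicity of $g$ in a way that is lost once you bound $g$ pointwise from below.
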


Here $[\,\cdot\,]_+ = \max\{0,\,\cdot\,\}$ denotes the positive part. The  bound \eqref{th} allows us to conclude that for all normalized $\psi\in \mathcal{A}_q^{N,\ell}(\vn)$ with $\Egl(\psi)< E$ we have $n_j\leq q$ for all $j$ if we choose $\ell$ such that $E\ell^2 q^{2/3}\leq \kappa$. Furthermore, for large $E\ell^2$ we get the bound $\max_j n_j \lesssim  q^{2/5} (E\ell^2)^{3/5}$. 

\begin{proof}
We use Lemma 3 from \cite{Frank} which states that  for a subset $A \subseteq \{1,\ldots,N\}$ corresponding to particles $x_k \in B_j$ for $k\in A$,
\begin{gather}
 \sum_{i \in A}   \int_{B_j^{|A|}} g(\vec x)^2 |\nabla_i \psi (\vec y)|^2 \dd \vec y_A \ge \frac {\tilde \kappa} {\ell^2} \left[ |A| - q \right]_+ \int_{B_j^{|A|}} g(\vec x)^2 |\psi(\vec y)|^2 \dd y_A
\label{eqLem3}
\end{gather}
for some $\tilde \kappa > 0$ independent of $A$, $\ell$ and $\psi$. Here $\vec y_A$ is short for $\{y_i\}_{i\in A}$. Integrating this over the $\{y_j\}_{j\not\in A}$ and summing over $j$ yields \eqref{th} with the exponent $5/3$ replaced by $1$, and $\kappa = \tilde\kappa q^{2/3}$.

To raise the exponent from $1$ to $5/3$, we partition $B_j$ into $\mu^3$ disjoint cubes $\{C_k\}_k$ of side length $\ell/\mu$ for some integer $\mu\geq 1$. 
We use the identity
\begin{gather}
1 = \sum_{\substack{Q \subseteq A}} \prod_{s \in Q}  \chi_{C_k}(x_s) \prod_{t \in Q^c} \chi_{C_k^c}(x_t)
\label{identSimple} 
\end{gather}
for $\vec x_A\in B_j^{|A|}$, where $Q^c$  denotes $A \setminus Q$ and $C_k^c =B_j\setminus C_k$. By plugging  \eqref{identSimple} into \eqref{eqLem3} we obtain
\begin{gather*}
\sum_{i \in A} \int_{B_j^{|A|}} g(\vec x)^2 |\nabla_i \psi(\vec y)|^2 \dd \vec y_A = \sum_{i \in A} \sum_{k=1}^{\mu^3} \int_{B_j^{|A|}} \chi_{C_k}(x_i) g(\vec x)^2 |\nabla_i \psi(\vec y)|^2 \dd \vec y_A\\
= \sum_{i \in A} \sum_{k=1}^{\mu^3} \sum_{Q \subseteq A}  \int_{B_j^{|A|}} \prod_{s \in Q} \chi_{C_k}(x_s) \prod_{t \in Q^c}  \chi_{C_k^c}(x_t) \chi_{C_k}(x_i) g(\vec x)^2 |\nabla_i \psi(\vec x)|^2 \dd \vec x_A \label{eqC411} \\
 =  \sum_{k=1}^{\mu^3} \sum_{Q \subseteq A} \sum_{i \in Q} \int_{B_j^{|A|}} \prod_{s \in Q}   \chi_{C_k}(x_s) \prod_{t \in Q^c}\chi_{C_k^c}(x_t) g(\vec x)^2 |\nabla_i \psi(\vec x)|^2 \dd \vec x_A \numberthis \label{eqLem4}
 \end{gather*}
For the integration over $\{y_s\}_{s\in Q}$ we can  again use \eqref{eqLem3}, with suitably rescaled variables to replace the integration over $B_j$ with the one over $C_k$. (Note that $g$ is homogeneous of order $-1$ and satisfies the simple scaling property $g(\lambda \vec x) = \lambda^{-1} g(\vec x)$ for $\lambda >0$.)
This yields the bound 
\begin{gather*}
\eqref{eqLem4}  \ge \sum_{k=1}^{\mu^3} \sum_{Q \subseteq A} \frac {\mu^2 \tilde \kappa} {\ell^2} \left( |Q| - q\right)   \int_{C_k^{|Q|}} \dd \vec y_Q \, \int_{{C_k^c}^{(|A|-|Q|)}} \dd \vec y_{Q^c} \, g(\vec x)^2 |\psi(\vec y)|^2 \label{eqC412}  \\
= \frac {\mu^2 \tilde \kappa}{\ell^2}(|A| -  \mu^3 q)  \int_{B_j^{|A|}} g(\vec x)^2 |\psi(\vec y)|^2 \dd \vec y_A \numberthis \label{eqLem5}
\end{gather*}
In the last step, we used again the identity \eqref{identSimple} as well as 
\begin{gather}
|A|  = \sum_{k=1}^{\mu^3}  \sum_{\substack{Q \subseteq A}} |Q| \prod_{s \in Q} \chi_{C_k}(x_s) \prod_{t \in Q^c} \chi_{C_k^c}(x_t)
\end{gather}
Since the left side of \eqref{eqLem5} is obviously non-negative, we can replace $|A| -  \mu^3 q$ by its positive part on the right side.

It remains to choose $\mu$. 
If we ignore the restriction that $\mu\geq 1$ is an integer, we would choose $\mu = (2/5) (|A|/q)^{1/3}$ to obtain the desired coefficient $\propto |A|^{5/3}/q^{2/3}$. It is easy to see that 
\begin{equation}
\sup_{\mu\in\N}  \mu^2 \left[  |A| -  \mu^3 q \right]_+ \geq \frac c {q^{2/3}} \left[ |A| - q\right]_+^{5/3}
\end{equation}
for some universal constant $c>0$. This proves the desired bound, with $\kappa = \tilde \kappa c$.
\end{proof}

\section{A bound on the entropy}\label{sec:entropy}

In this section we shall use the estimates above to give a rough bound on 
\begin{equation}
N_g(E) = \tr \chi_{H_g < E}\,,
\end{equation}
that is, the maximal number of orthonormal  functions in $\mathcal{A}_q^N$ with $\E_g(\psi)< E$, for some (large) $E$. Its logarithm is, by definition, the entropy. Using the localization technique described in Section~\ref{secBoundaryLocalized}, the min-max principle implies that 
\begin{equation}\label{nene}
N_g(E) \leq \sum_{\vn} N_g^{\vn} (E) 
\end{equation}
where $N_g^\vn(E)$ is the maximal number of orthonormal functions in $\mathcal{A}_q^{N,\ell}(\vn)$ with $\Egl(\psi)<E$. Given $E$, we shall choose $\ell$ small enough such $E\ell^2 q^{2/3}\leq \kappa$, with $\kappa$ the constant in Prop.~\ref{prop:nj}. As remarked there, this implies that $n_j \leq q$ 
 for all $1\leq j\leq M$.
 
We will actually show that if $E\ell^2 $ is small enough, then the spectral gap for an excitation is  larger than $E$, and hence $N_g^{\vn} (E)$ is simply equal to the dimension of the space of ground states. 

\begin{lemma}\label{lem:neg}
There exists a universal constant  $c >0$ such that if we choose $E\ell^2  \leq c$, then  
\begin{equation}
N_g^{\vn} (E)  = \prod_{j=1}^M  \binom{q}{n_j}
\end{equation}
\end{lemma}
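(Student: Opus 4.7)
The plan is to pin down the zero-energy subspace of $\Egl$ inside $\mathcal A_q^{N,\ell}(\vn)$ and to establish a spectral gap of order $\ell^{-2}$ above it, so that $N_g^\vn(E)$ collapses to the dimension of that subspace once $E\ell^2$ is small. Let $\mathcal G\subset\mathcal A_q^{N,\ell}(\vn)$ denote the subspace of functions depending only on the spin variables $\vec\sigma$ (i.e.\ constant in $\vec x$). Such a function is antisymmetric in $y_i$ among the particles sharing a box, hence antisymmetric in the spin labels of those particles; this forces $n_j\le q$ in every box, a condition already guaranteed via Proposition~\ref{prop:nj} once we impose $E\ell^2 q^{2/3}\le\kappa$. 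There are then exactly $\prod_j\binom{q}{n_j}$ Slater-type elements of $\mathcal G$ (one per choice of $n_j$-subset of the $q$ spin states in each box), they are $L^2(g^2\dd\vec y)$-orthogonal because they have disjoint supports in $\vec\sigma$, and each satisfies $\Egl=0$. This already gives the easy direction $N_g^\vn(E)\ge\prod_j\binom{q}{n_j}$.

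For the reverse inequality I would use the min--max principle, for which it suffices to produce $c'>0$ such that every $\psi\in\mathcal A_q^{N,\ell}(\vn)$ whose spatial mean
\[
\bar\psi(\vec\sigma)\coloneqq\frac{1}{|\mathcal B(\vn)|}\int_{\mathcal B(\vn)}\psi(\vec x,\vec\sigma)\dd\vec x
\]
vanishes identically satisfies $\Egl(\psi)\ge c'\ell^{-2}\|\psi\|_g^2$. Given any $(D{+}1)$-dimensional $W\subset\mathcal A_q^{N,\ell}(\vn)$ with $D=\prod_j\binom{q}{n_j}$, the averaging map $\psi\mapsto\bar\psi$ sends $W$ into the $D$-dimensional space $\mathcal G$, so some nonzero $\psi\in W$ has $\bar\psi\equiv 0$; hence $\sup_{\psi\in W}\Egl(\psi)/\|\psi\|_g^2\ge c'\ell^{-2}$, and by min--max there are at most $D$ eigenvalues strictly below any $E\le c'\ell^{-2}$.

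To prove the gap inequality I would chain three ingredients already assembled. First, Poincaré on the Neumann Laplacian on the product of cubes $\mathcal B(\vn)$ gives $\E^\ell(\psi)\ge(\pi^2/\ell^2)\|\psi\|^2$ when $\bar\psi\equiv 0$, since the unweighted spectral gap is $\pi^2/\ell^2$ and $\psi$ is orthogonal to the constants in each spin configuration. Next, \eqref{lbegl} yields $\Egl(\psi)\ge(K_-+V/(4\sqrt 3\,\ell))^2(\pi^2/\ell^2)\|\psi\|^2$. Finally, Lemma~\ref{lemma53} reads $\|\psi\|_g^2\le A\|\psi\|^2+B\,\Egl(\psi)$ with explicit $A,B\ge 0$; eliminating $\|\psi\|^2$ via the previous bound gives
\[
\|\psi\|_g^2\le\left[\frac{A\ell^2}{\pi^2(K_-+V/(4\sqrt 3\,\ell))^2}+B\right]\Egl(\psi).
\]
Elementary geometry shows the bracket is at most $C(1+q)\ell^2$ for an absolute $C$: any non-touching pair of boxes obeys $d_{jk}\ge\ell$, whence $K_+\le(1+2\sqrt 3)K_-$ when $K_->0$ (and $K_+=K_-=0$ otherwise); the inequality $K_-+V/(4\sqrt 3\,\ell)\ge V/(4\sqrt 3\,\ell)$ gives $V^2/(K_-+V/(4\sqrt 3\,\ell))^2\le 48\ell^2$ and, for $V\ge 1$, $V/(K_-+V/(4\sqrt 3\,\ell))^2\le 48\ell^2$, while the corresponding term in $B$ vanishes when $V=0$; and $\max_i|\mathcal{N}[i]|\le 27q$ because $n_j\le q$ in every box. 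Choosing $\ve=1$ in Lemma~\ref{lemma53} and assembling these estimates yields $c'\sim 1/(1+q)$.

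The main obstacle is the third paragraph. The Neumann Laplacian on $\mathcal B(\vn)$ has a clean spectral gap of $\pi^2/\ell^2$, but the physical inner product $\langle\,\cdot\,|\,\cdot\,\rangle_g$ is weighted by a function of all pairwise distances, so no gap for $\Egl$ follows directly from Poincaré. Lemma~\ref{lemma53} is precisely the device that trades $\|\psi\|_g^2$ against $\|\psi\|^2$ plus a controlled fraction of $\Egl$; reabsorbing the residual $\Egl$ back into the left-hand side is where the geometric estimates on $K_\pm$, $V$ and $|\mathcal{N}[i]|$ enter and set the $q$-dependence of $c$. A minor but essential step is verifying that $\psi\perp_{L^2}\mathcal G$ forces $\bar\psi\equiv 0$; this holds because $\bar\psi$ is itself antisymmetric-within-boxes in $\vec\sigma$ and is orthogonal in $\C^{q^N}$ to every such function, hence vanishes.
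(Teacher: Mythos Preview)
Your argument follows essentially the same route as the paper's: combine the lower bound \eqref{lbegl}, the Neumann spectral gap $\pi^2/\ell^2$ on $\mathcal B(\vn)$, and the norm comparison of Lemma~\ref{lemma53} to produce a gap of order $\ell^{-2}$ above the space $\mathcal G$ of spatially constant functions. Your min--max setup via the dimension count for the averaging map $\psi\mapsto\bar\psi$ is a clean variant of the paper's device of applying the norm bound to $(1-P_0)\psi$ and then using that $g$-orthogonality to $\mathcal G$ gives $\|(1-P_0)\psi\|_g^2\ge\|\psi\|_g^2$; these amount to the same thing, since your condition $\bar\psi\equiv 0$ is exactly $P_0\psi=0$. (The remark in your last paragraph about $\psi\perp_{L^2}\mathcal G$ forcing $\bar\psi\equiv0$ is not actually needed---the dimension count already furnishes a nonzero $\psi\in W$ with $\bar\psi=0$.)

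There is one quantitative slip. You bound $\max_i|\mathcal N[i]|\le 27q$, which gives $c'\sim 1/(1+q)$ and hence a $q$-dependent threshold, whereas the lemma asks for a \emph{universal} $c$. The fix is immediate: for $x_i\in B_k$ one has $|\mathcal N[i]|=n_k+m_k-1\le n_k(n_k+m_k-1)\le V$, so the last term in Lemma~\ref{lemma53} is at most
\[
\frac{(1+\ve^{-1})c_0\,V^2}{(K_-+V/(4\sqrt 3\,\ell))^{2}}\,\Egl(\psi)\ \le\ 48(1+\ve^{-1})c_0\,\ell^{2}\,\Egl(\psi),
\]
and the resulting $c'$ is independent of $q$. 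This is precisely how the paper handles the term.
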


\begin{proof}
With the aid of  \eqref{lbegl} we have
\begin{equation}\label{insr}
 \Egl(\psi) \geq \left( K_- + \frac V {4\sqrt{3} \ell} \right)^{2}  \E^\ell(\psi)
\end{equation}
for $\psi \in \mathcal{A}_q^{N,\ell}(\vn)$. 
The ground states of the operator corresponding to the quadratic form $\E^\ell$ are all constant, i.e., they are simply products of anti-symmetric functions of the spin variables corresponding to each box, and have zero energy. The spectral gap above the ground state energy  is given by $(\pi/\ell)^2$. With $P_0$ denoting the projection in $L^2({\mathcal{B}}(\vn),\dd\vec y)$ onto the ground state space, we thus have
\begin{equation}\label{spg}
\E^\ell(\psi) \geq \frac{\pi^2}{\ell^2} \left\| (1-P_0) \psi\right\|^2
\end{equation}
In order to bound the norm on the right side from below  in terms of the weighted $\|\,\cdot\,\|_g$ norm, we shall use Lemma~\ref{lemma53}. In \eqref{lbng}, we can simply bound 
\begin{equation}
 \sum_{i=1}^N  \left|\mathcal{N}[i] \right| \int |\nabla_i \psi(\vec y)|^2 g(\vec x)^2\, \dd\vec y  < E  \norm{\psi}_g^2 \sum_{i=1}^N  \left|\mathcal{N}[i] \right|  = E V \norm{\psi}_g^2
 \end{equation}
to obtain 
 \begin{equation}
 \|\psi\|_g^2  \leq    \left[(1+\ve) K_+^2 + \frac {c_1}{\ell^2} \left( 1 + \ve^{-1} \right) V^2  \right] \| \psi\|^2 +  48 c_0  \left( 1 + \ve^{-1}\right)     E \ell^2  \|\psi\|_g^2 
\end{equation}
for any $\ve>0$ and any $\psi\in \mathcal{A}_q^{N,\ell}(\vn)$ with $\Egl(\psi)< E \|\psi\|_g^2$. If $E\ell^2$ is small, we can take $\ve=1$ to  conclude that 
\begin{equation}
 \|\psi\|_g^2  \leq    c  \left[  K_+^2 +  V^2 \ell^{-2}  \right] \| \psi\|^2  
\end{equation}
Moreover, note that $K_+ \leq (1+2\sqrt{3}) K_-$, since $d_{jk}>0$ actually implies $d_{jk}\geq \ell$. We thus also have that  
\begin{equation}
 \|\psi\|_g^2  \leq    c   \left( K_- + \frac V {4\sqrt{3} \ell} \right)^{2}  \| \psi\|^2  
\end{equation}
Applying this to  $(1-P_0)\psi$ in \eqref{spg} and inserting the resulting bound in \eqref{insr} we obtain
\begin{equation}
 \Egl(\psi) \geq c \ell^{-2}  \left\| (1-P_0) \psi\right\|_g^2
\end{equation}
Finally, note that the ground states of $\Egl$ and $\E^\ell$ actually agree, up to a multiplicative normalization constant. Hence, if $\psi$ is orthogonal to a ground state with respect to the inner product $\langle\,\cdot \, | \, \cdot\, \rangle_g$, then 
\begin{equation}
\left\| (1-P_0) \psi\right\|_g^2 = \left\|  \psi\right\|_g^2 + \left\| P_0 \psi\right\|_g^2 \geq \left\|  \psi\right\|_g^2 
\end{equation}
This concludes the proof.
\end{proof}

In combination with \eqref{nene}, Lemma~\ref{lem:neg} yields the bound
\begin{equation}
N_g(E) \leq \sum_{\vn} \prod_{j=1}^M  \binom{q}{n_j} = \binom{qM}{N} \leq \left( \frac { q M e}{N} \right)^N
\end{equation}
for $E\ell^2\leq c$. We recall that the number of boxes is $M=(L/\ell)^3 = N/(\rho\ell^3)$, which is large for $E\ell^2 \sim 1$ and $E\gg  L^{-2}$. Hence we get the upper bound
\begin{equation}\label{bnge}
N_g(E) \leq    \left( c \frac {q  E^{3/2} }{\rho }\right)^N
\end{equation}
for  a suitable constant $c>0$.  This bound readily implies the following proposition.

\begin{proposition}\label{prop:ent}
Let $\{E_j\}_j$ denote the eigenvalues of the Hamiltonian $H_g$ associated to the quadratic form $\E_g$ in \eqref{defeg} on $\mathcal{A}_q^N$. 
For given $\eta = \beta\rho^{2/3}$ there exists a  $c_\eta>0$ such that if $\bar E \geq  c_\eta \beta^{-1} N \ln N$ then 
\begin{gather}
 \sum_{E_j \geq  \bar E } e^{-\beta E_j} \leq  2\, e^{- \frac 12  \beta \bar E }
\end{gather}
\end{proposition}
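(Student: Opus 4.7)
The plan is to represent the sum via the counting function $N_g$ and leverage the entropy bound \eqref{bnge} just derived. I would first apply the layer-cake identity $e^{-\beta E_j} = \int_{E_j}^\infty \beta e^{-\beta t}\,\dd t$ and Fubini to obtain
\begin{equation*}
\sum_{E_j \geq \bar E} e^{-\beta E_j} \;\leq\; \beta \int_{\bar E}^\infty e^{-\beta t}\, N_g(t)\,\dd t ,
\end{equation*}
and then, for each $t \geq \bar E$, invoke \eqref{bnge} with the box side length $\ell = \ell(t)$ chosen so that $t\ell^2$ is bounded by the universal constant appearing in Lemma~\ref{lem:neg}. This choice is admissible in the parameter range of interest, since $\bar E L^2 \gtrsim c_\eta \eta^{-1} N^{5/3}\ln N$, and it yields $N_g(t) \leq (c q\, t^{3/2}/\rho)^N$.

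Next I would split $e^{-\beta t} = e^{-\beta t/2}\cdot e^{-\beta t/2}$ and use one factor to absorb the counting-function bound. Setting $\varphi(t) := N\ln(c q\, t^{3/2}/\rho) - \beta t/2$, its derivative $\varphi'(t) = 3N/(2t) - \beta/2$ is negative on $[3N/\beta,\infty)$. The assumption $\bar E \geq c_\eta \beta^{-1} N\ln N$ ensures $\bar E \geq 3N/\beta$ once $c_\eta \ln N \geq 3$, so by monotonicity it suffices to check the single-point inequality $\varphi(\bar E) \leq -\beta \bar E/4$.

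This single-point check is the main (essentially only) task. Substituting $\bar E = c_\eta N\ln N/\beta$ and $\beta = \eta \rho^{-2/3}$, the factor of $\rho$ cancels and the inequality reduces to
\begin{equation*}
\ln(cq) + \tfrac{3}{2}\ln(c_\eta/\eta) + \tfrac{3}{2}\ln N + \tfrac{3}{2}\ln\ln N \;\leq\; \tfrac{c_\eta}{4}\, \ln N ,
\end{equation*}
which holds for $N$ large provided $c_\eta$ is chosen strictly larger than $6$, plus an $\eta$- and $q$-dependent slack to swallow the remaining terms; this defines an admissible function $c_\eta$. Granted this, the proof concludes with
\begin{equation*}
\beta \int_{\bar E}^\infty e^{-\beta t} N_g(t)\,\dd t \;\leq\; e^{-\beta\bar E/4}\, \beta \int_{\bar E}^\infty e^{-\beta t/2}\,\dd t \;=\; 2\,e^{-3\beta\bar E/4} \;\leq\; 2\, e^{-\beta\bar E/2} .
\end{equation*}
The bookkeeping in the reduced inequality is exactly what forces the $N\ln N$ scaling of $\bar E$ in the hypothesis: the $\tfrac 32 \ln N$ generated by the counting-function exponent has to be dominated by the right-hand side, which is precisely what the hypothesis is tuned to deliver, and which simultaneously pins down the $\eta$-dependence of $c_\eta$ through the $\ln(c_\eta/\eta)$ contribution.
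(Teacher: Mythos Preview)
Your proof is correct and follows essentially the same approach as the paper: both arguments reduce the tail sum to the counting-function bound \eqref{bnge} and then check that the exponential decay beats the polynomial growth $(cqE^{3/2}/\rho)^N$ precisely when $\beta\bar E \gtrsim N\ln N$. The only cosmetic difference is that the paper slices $[\bar E,\infty)$ into dyadic blocks $[(k+1)\bar E,(k+2)\bar E)$ and verifies $N_g((k+2)\bar E)\,e^{-(k+1/2)\beta\bar E}\leq 2^{-k}$ for each $k\geq 0$, whereas you use the continuous layer-cake representation; the resulting single-point check and the identification of $c_\eta$ are the same in both versions.
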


\begin{proof}
We have
\begin{equation}
 \sum_{E_j \geq  \bar E } e^{-\beta E_j} \leq  \sum_{k\geq 0 } N_g( (k+2 ) \bar E)e^{- (k+1) \beta \bar E} 
\end{equation}
and thus the result follows if 
\begin{equation}
N_g ( (k+2 ) \bar E)e^{- (k+\frac 12) \beta \bar E}  \leq \frac 1{2^k}
\end{equation}
for all $k\geq 0$. Using the bound \eqref{bnge} one easily checks that this is the case under the stated condition on $\bar E$ for suitable $c_\eta$.
\end{proof}

For evaluating the free energy, we can thus limit our attention to eigenvalues $E_j$ satisfying $\beta E_j \leq c_\eta N \ln N$ for suitable $c_\eta>0$. We shall show in the next section that in this low energy sector the eigenvalues are well approximated by the corresponding ones for non-interacting particles.

\section{Comparison with non-interacting particles in the low-energy sector}\label{sec:low}

We shall now investigate the bounds derived in Section~\ref{sec:enno} more closely and apply them to the low energy sector, where 
$\E_g(\psi)\leq E \|\psi\|_g^2$ for some $E\lesssim N \ln N$. 
We again localize the particles into boxes, this time with much larger $\ell$, however. 
We start with the estimate on the ratio of the norm $\|\psi\|_g$ to the  standard, non-weighted $L^2$ norm $\|\psi\|$. 

\begin{proposition}\label{prop:71}
Let $\psi \in \mathcal{A}_q^{N,\ell}(\vn)$ satisfy $\Egl(\psi)\leq E \|\psi\|_g^2$ for some  $E$ with $E\ell^2 \gtrsim 1$ for large $N$. Then
\begin{equation}\label{prop1}
1\geq  \left(  K_- + \frac V {4\sqrt{3} \ell}\right)^2 \frac{ \|\psi\|^2}{ \|\psi\|_g^2} \geq 1-\delta
\end{equation}
with
\begin{equation}\label{def:delta}
\delta \leq c \left[ q^{1/5} (E\ell^2)^{3/10} N^{-1/3} (\rho\ell^3)^{-1/6} +q^{2/5}(E\ell^2)^{11/10} N^{-7/6} (\rho\ell^3)^{-1/3}  \right]
\end{equation}
with $K_-$ and $V$  defined in \eqref{def:kmv}.
\end{proposition}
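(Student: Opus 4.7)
The lower bound $K^2\|\psi\|^2/\|\psi\|_g^2\le 1$ in \eqref{prop1} (with $K=K_-+V/(4\sqrt 3\ell)$) is nothing but the first inequality of Lemma~\ref{lemma53} applied to $\psi$.

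For the upper bound, my plan is to reduce the estimate to an $L^2$ bound on $g-K$. Since $g\ge K$ pointwise on $\mathcal B(\vn)$ by \eqref{lbg}, the identity $g^2-K^2 = (g-K)(g+K)$ together with $g+K\le 2g$ gives $g^2-K^2\le 2g(g-K)$. A single Cauchy--Schwarz then yields $\|\psi\|_g^2-K^2\|\psi\|^2 \le 2\|\psi\|_g\,\|(g-K)\psi\|$, so it is enough to prove $\|(g-K)\psi\|^2/\|\psi\|_g^2\lesssim \delta^2$ with $\delta$ as in \eqref{def:delta}.

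To bound this ratio, decompose $g = g_{\mathrm{far}} + g_{\mathrm{close}}$ according to whether the pair $(i,j)$ has $x_i,x_j$ in non-adjacent or in same/adjacent boxes. The derivation of \eqref{lbg} shows that both $g_{\mathrm{far}}-K_-\ge 0$ and $g_{\mathrm{close}}-V/(4\sqrt 3\ell)\ge 0$ pointwise, with the upper bound $g_{\mathrm{far}}\le K_+$. Using $(a-b)^2\le a^2$ for $a\ge b\ge 0$ on the close piece, one obtains $(g-K)^2 \le 2(K_+-K_-)^2 + 2 g_{\mathrm{close}}^2$. The integral of $g_{\mathrm{close}}^2$ is controlled exactly as in the proof of Lemma~\ref{lemma53}: Cauchy--Schwarz on the sum produces $\int g_{\mathrm{close}}^2|\psi|^2 \lesssim V\sum_i|\mathcal N[i]|\int|\nabla_i\psi|^2 + V^2/\ell^2\cdot\|\psi\|^2$ via Lemma~\ref{thmHardyBallBox}, and the hypothesis $\Egl(\psi)\le E\|\psi\|_g^2$ together with \eqref{lbg} converts $\int|\nabla_i\psi|^2$ into $\int|\nabla_i\psi|^2 g^2/K^2$, so that $\int g_{\mathrm{close}}^2|\psi|^2/\|\psi\|_g^2 \lesssim V\max_i|\mathcal N[i]|E/K^2 + V^2/(\ell^2 K^2)$. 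Inserting $V\le 27 Nn_{\max}$, $\max_i|\mathcal N[i]|\le 27 n_{\max}$, the bound $n_{\max}\le c\,q^{2/5}(E\ell^2)^{3/5}$ from Proposition~\ref{prop:nj} (valid in the regime $E\ell^2\gtrsim 1$), and the lower bound $K\ge K_-\gtrsim N^{5/3}\rho^{1/3}$ (which holds because $n_{\max}\ll N$ forces $\sum_{d_{jk}>0}n_j n_k\gtrsim N^2$ while each such pair satisfies $d_{jk}+2\sqrt 3\ell\lesssim L$) shows that these two contributions are bounded by $c(\delta_1^4+\delta_2^2)$, hence by $c(\delta_1+\delta_2)^2$.

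The main obstacle is the $(K_+-K_-)^2$ contribution, since the naive bound $K_+\le(1+2\sqrt 3)K_-$ is only a constant factor and far too weak. My plan is to apply Cauchy--Schwarz directly to the sum $K_+-K_-=\sum_{d_{jk}\ge\ell} n_j n_k\cdot 2\sqrt 3\ell/[d_{jk}(d_{jk}+2\sqrt 3\ell)]$ with weights $1/(d_{jk}+2\sqrt 3\ell)$, obtaining $(K_+-K_-)^2\lesssim K_-\sum n_j n_k\cdot\ell^2/d_{jk}^3$, and then estimate the remaining three-body-type sum via $\sum_k n_k/d_{jk}^3\lesssim n_{\max}/\ell^3$ (any logarithmic factor from the short-distance pairs having to be absorbed into the constants or eliminated by a sharper distance split). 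Combined with the lower bound on $K_-$ this yields $(K_+-K_-)^2/K^2\lesssim \delta_1^2$, and together with the two previous estimates the desired bound $\delta\lesssim \delta_1+\delta_2$ follows by taking the square root.
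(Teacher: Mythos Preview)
Your overall framework is sound and somewhat different from the paper's: the paper invokes Lemma~\ref{lemma53} directly, which already produces the comparison
\[
\left[1 - (1+\ve^{-1})c_0\,\frac{12L^2(27\bar n)^2}{N(N-1)^2}\,E\right]\|\psi\|_g^2 \le (1+\ve)\Bigl[K_+^2 + \frac{c_1}{\ve\ell^2}V^2\Bigr]\|\psi\|^2,
\]
and then optimizes over $\ve$ together with the $K_+/K_-$ estimate. Your route via $\|\psi\|_g^2-K^2\|\psi\|^2\le 2\|\psi\|_g\,\|(g-K)\psi\|$ and the decomposition $g=g_{\mathrm{far}}+g_{\mathrm{close}}$ is cleaner in some ways, and your treatment of the close part and of $V^2/(\ell^2K^2)$ matches the paper's arithmetic (these give $\delta_2^2$ and $\delta_1^4$ respectively, using $K\ge K_-\gtrsim N^{5/3}\rho^{1/3}$).

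The gap is in your estimate of $(K_+-K_-)^2$. Your Cauchy--Schwarz produces $\sum_{j,k}n_jn_k\,\ell^2/d_{jk}^3$, and since the number of boxes $k$ at distance $\sim m\ell$ from a fixed box is $\sim m^2$, one has $\sum_{k:d_{jk}\ge\ell}d_{jk}^{-3}\sim\ell^{-3}\sum_{m=1}^{L/\ell}m^{-1}\sim\ell^{-3}\ln(L/\ell)$. This logarithm is $\sim\ln N$, so it cannot be ``absorbed into the constants''; your bound on $\delta$ would acquire an extra $(\ln N)^{1/2}$ factor and would not match the stated \eqref{def:delta}. The paper avoids this by a different mechanism: it splits the sum at a cutoff distance $r\sqrt3\ell$ and bounds the two pieces separately,
\[
K_+-K_- \le \bar n\sum_{0<d_{jk}<r\sqrt3\ell}\frac{n_j}{d_{jk}}\cdot\frac{2\sqrt3\ell}{d_{jk}+2\sqrt3\ell} + \Bigl(1+\tfrac r2\Bigr)^{-1}K_+ \le c\,\frac{\bar n r N}{\ell} + \Bigl(1+\tfrac r2\Bigr)^{-1}K_+,
\]
using that $2\sqrt3\ell/[d_{jk}(d_{jk}+2\sqrt3\ell)]\le 2\sqrt3\ell/d_{jk}^2$ and $\sum_{k:\ell\le d_{jk}<r\sqrt3\ell}d_{jk}^{-2}\lesssim r/\ell^2$ (only linear in $r$, no log). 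Optimizing over the integer $r$ then yields $(K_+-K_-)^2\lesssim \bar n N K_-/\ell$, hence $(K_+-K_-)^2/K^2\lesssim\delta_1^2$ with no logarithmic loss. Your parenthetical about ``a sharper distance split'' is in fact the right fix, but it replaces rather than supplements your Cauchy--Schwarz, and you should carry it out explicitly.
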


We note that $\delta$ is small if 
\begin{equation}
E \ell^2 \ll \min\{ N^{10/9} (\rho \ell^3)^{5/9} ,  N^{35/33} (\rho \ell^3)^{10/33} \}
\end{equation}
which gives us freedom to choose $\ell$ large while $E \lesssim N \ln N$. We will choose $\ell \sim N^{\nu}$ for rather small $\nu$ below, in which case the first term in \eqref{def:delta} will be dominating.

\begin{proof}
The first bound in \eqref{prop1} follows immediately \eqref{lbng}. For the lower bound, we use 
\begin{equation}
 \sum_{i=1}^N  \left|\mathcal{N}[i] \right| \int |\nabla_i \psi(\vec y)|^2 g(\vec x)^2\, \dd\vec y \leq 27 \bar n E \|\psi\|_g^2
\end{equation} 
in \eqref{lbng}, 
where we denote $\bar n = \max_{j} n_j$. We can also bound $V\leq 27 \bar n N$ and  
\begin{equation}
K_- + \frac V {4\sqrt{3} \ell} \geq  \frac{ N(N-1)  }{2 \sqrt{3} L }
\end{equation} 
The second bound in \eqref{lbng} thus becomes
\begin{equation}\label{sbtb}
\left[ 1 -\left( 1 + \ve^{-1}\right)   {c_0}  \frac{12 L^2 (27 \bar n)^2 }{ N(N-1)^2 }    E\right]  \|\psi\|_g^2  \leq    \left[(1+\ve) K_+^2 + \frac {c_1}{\ell^2} \left( 1 + \ve^{-1} \right) V^2  \right] \| \psi\|^2  
 \end{equation}
for arbitrary $\ve>0$. By assumption $E \ell^2$ is not small, hence we have $\bar n  \leq c q^{2/5} (E\ell^2)^{3/5}$, as remarked after Proposition~\ref{prop:nj}. 

It remains  to estimate the ratio $K_-/K_+$. We distinguish the contribution to the sum coming from $d_{jk} < r \sqrt{3}\ell$ and $d_{jk}\geq r \sqrt{3}\ell$, respectively, for some large integer $r$ to be chosen below. We have 
\begin{align}\nonumber
K_+ - K_- & = \sum_{\substack{1 \le j < k \le M\\ d_{jk} > 0}} \frac {n_j n_k}{d_{jk}} \frac {2 \sqrt{3} \ell}{d_{jk} + 2 \sqrt{3} \ell} \\ \nonumber &  \le \bar n\sum_{\substack{1 \le j < k \le M\\ 0 < d_{jk} < r \sqrt 3 \ell}} \frac {n_j }{d_{jk}} \frac {2 \sqrt{3} \ell}{d_{jk} + 2 \sqrt{3} \ell} + \lk (1 + \frac r 2 \rk)^{-1} \sum_{\substack{1 \le j < k \le M\\ d_{jk} \ge r \sqrt 3 \ell}} \frac {n_j n_k}{d_{jk}} \\ & 
\le c \frac {\bar n r N} \ell  +  \lk (1 + \frac r 2 \rk)^{-1} K_+
\end{align}
By optimizing over $r$ as well as $\ve$ and using that $\bar n \leq c q^{2/5} (E\ell^2)^{3/5}$ we arrive at the desired result. 
\end{proof}

In combination with \eqref{lbegl}, Proposition~\ref{prop:71}  yields the lower bound
\begin{equation}\label{compes}
\frac{\Egl(\psi)}{\|\psi\|_g^2} \geq \frac{\E^\ell(\psi)}{\|\psi\|^2}\left(1-\delta\right)
\end{equation}
for $\psi\in\mathcal{A}_q^{N,\ell}(\vn)$ in the low energy sector $\Egl(\psi) < E$. This allows us to compare our model directly with non-interacting particles. Note that the eigenfunctions of the operator corresponding to the quadratic form on the right side are tensor products over different boxes and, in particular, the eigenvalues are simply sums over the corresponding eigenvalues of free fermions in each box. 
The bound \eqref{compes} does not directly give us lower bounds on the eigenvalues of $H_g$, except for the lowest one, however. 
To complete the proof, we have to estimate the difference between the inner product $\langle\,\cdot\,|\,\cdot\,\rangle_g$ and the standard inner product on $L^2$, denoted by $\langle\,\cdot\,|\,\cdot\,\rangle$ in the following.

We define the multiplication operator 
\begin{equation}
G  = \left( K_- + \frac V {4\sqrt{3} \ell} \right)^{-1} g(\vec x)
\end{equation}
which is larger or equal to $1$ by \eqref{lbg}. The bound \eqref{lbegl} thus reads
\begin{equation}
\frac{\Egl(\psi)}{\|\psi\|_g^2} \geq \frac{ \E^\ell(\psi)}{\| G \psi\|^2} = \frac { \langle \phi | G^{-1} H G^{-1} | \phi\rangle}{\|\phi\|^2}
\end{equation}
where we introduced $\phi = G \psi$ and denoted by $H$ the Hamiltonian for non-interacting particles, i.e., the Laplacian on ${\mathcal{B}}(\vn)$ with Neumann boundary conditions. Note that the orthogonality condition $\langle \psi_j|\psi_k\rangle_g = 0$ is equivalent to $\langle \phi_j|\phi_k\rangle=0$. 
Given some $E_0>0$, we define the cut-off Hamiltonian
\begin{gather*}
H_c =  H \,\theta( E_0 - H )\,, \numberthis
\end{gather*}
with $\theta$ denoting the Heaviside step function. This is clearly a bounded operator with $\norm{H_c} \le E_0$. Obviously 
\begin{equation}
 \langle \phi | G^{-1} H G^{-1} | \phi\rangle \geq \norm{H_c^{1/2} G^{-1} \phi}^2
\end{equation}
which we further bound as 
\begin{align*}
\norm{H_c^{1/2} G^{-1} \phi }^2  & \ge \lk (\norm{H_c^{1/2} \phi} - \norm{H_c^{1/2} (1 - G^{-1}) \phi} \rk )^2 \\ 
  & \ge \norm{H_c^{1/2} \phi}^2 - 2 \norm{H_c^{1/2} \phi} \norm{H_c^{1/2}} \norm{(1-G^{-1} ) \phi} \\
 & \ge \norm{H_c^{1/2} \phi}^2 - 2  E_0 \norm{(1-G^{-1} ) \phi} \norm{\phi} \numberthis
\end{align*}
Now
\begin{gather*}
\norm{(1-G^{-1} ) \phi}   \leq \norm {(1- G^{-2})^{1/2} \phi} \leq \delta^{1/2} \norm{\phi} \numberthis
\end{gather*}
where we used $G\geq 1$ in the first and Proposition~\ref{prop:71} in the second step. We conclude that
\begin{equation}\label{conss}
\frac{\Egl(\psi)}{\|\psi\|_g^2} \geq  \frac { \langle \phi | H_c - 2 E_0 \delta^{1/2} | \phi\rangle}{\|\phi\|^2}
\end{equation}
under the conditions stated in Proposition~\ref{prop:71}.

\section{Convergence of the free energy}
\label{secFreeEnergyConvergence}

We now have all the necessary tools to complete the proof of Theorem~\ref{thmAll}.
Proposition~\ref{prop:ent} implies that if we choose $\bar E =  c_\eta \beta^{-1} N \ln N$ for a suitable constant $c_\eta>0$, then 
\begin{gather}\label{eq81}
F_g(\beta,N,L) \geq - T  \ln \left(  2\, e^{- \frac 12  \beta \bar E }  +  \sup_{\substack{\{\psi_k\in \mathcal{A}_q^{N}\} \\ \braket{\psi_i}{\psi_j}_g = \delta_{ij} }} \sum_{k=1}^{N_g(\bar E)}  e^{- \beta \E_g(\psi_k)}
 \right)
\end{gather}
Here $N_g(\bar E)$ denotes the number of states with energy below $\bar E$, which was estimated in \eqref{bnge}. We can write, alternatively, 
\begin{equation}\label{eq82}
\sup_{\substack{\{\psi_k\} \\ \braket{\psi_i}{\psi_j}_g = \delta_{ij} }} \sum_{k=1}^{N_g(\bar E)}  e^{- \beta \E_g(\psi_k)} = \sup_{\substack{\{\psi_k\} ,\, \E_g(\psi_k)< \bar E \\ \braket{\psi_i}{\psi_j}_g = \delta_{ij} }} \sum_{k}  e^{- \beta \E_g(\psi_k)}
\end{equation}
By localizing into small boxes of side length $\ell$ with Neumann boundary conditions, as detailed in Section~\ref{secBoundaryLocalized}, we further have by the min-max principle 
\begin{equation}\label{eq83}
\eqref{eq82} \leq \sum_{\vn} \sup_{\substack{\{\psi \in \mathcal{A}_q^{N,\ell}(\vn)\} ,\, \Egl(\psi)\leq \bar E \\ \braket{\psi_i}{\psi_j}_g = \delta_{ij} }} \sum_{k}  e^{- \beta \Egl(\psi_k)}
\end{equation}
If we choose $\bar E \ell^2 \gtrsim 1$, we can apply the bound \eqref{conss} from the previous subsection. It implies
\begin{equation}\label{eq84}
\eqref{eq83} \leq e^{2\beta E_0 \delta^{1/2}} \sum_{\vn} \sup_{\substack{\{\phi \in G \mathcal{A}_q^{N,\ell}(\vn)\} ,\, \langle\phi_k| H_c |\phi_k\rangle \leq \bar E+ 2 E_0 \delta^{1/2} \\ \braket{\phi_i}{\phi_j} = \delta_{ij} }} \sum_{k}  e^{- \beta  \langle\phi_k| H_c |\phi_k\rangle}
\end{equation}
with $\delta$ defined in Proposition \ref{prop:71}. If we choose $E_0$ such that $\bar E + 2 E_0 \delta^{1/2} \leq E_0$, which is possible for $\delta<1/4$, we can drop the cutoff in $H_c$ and replace $H_c$ by $H$, the Laplacian on $(\bigcup_j B_j)^N$ with Neumann boundary conditions. To obtain an upper bound on \eqref{eq84}, we can then further neglect the  bound on $\langle\phi_k| H |\phi_k\rangle$, and sum over all eigenvalues. We obtain
\begin{equation}\label{eq85}
\eqref{eq84} \leq e^{2\beta E_0 \delta^{1/2}} e^{- \beta F(\beta,N,L,\ell)}
\end{equation}
where $F(\beta,N,L,\ell)$ denotes the free energy of non-interacting fermions in $\bigcup_j B_j$ (with Neumann boundary conditions on the boundaries of the $B_j$). In particular, in combination \eqref{eq81}--\eqref{eq85} imply 
\begin{equation}\label{eq86}
F_g(\beta,N,L) \geq F(\beta,N,L,\ell) - 2 E_0 \delta^{1/2} - T \ln \left( 1 + 2 \, e^{- \frac 12  \beta \bar E }   e^{-2\beta E_0 \delta^{1/2}} e^{ \beta F(\beta,N,L,\ell)} \right) 
\end{equation}

We will choose $\ell\gtrsim 1$, in which case $F(\beta,N,L,\ell) \sim N$ and hence the last term in \eqref{eq86} is, in fact, exponentially small in $N$,  since $\bar E \sim N \ln N$. To complete the proof, it suffices to observe that 
\begin{equation}
F(\beta,N,L,\ell)  \geq F(\beta,N,L) - c_\eta \frac {N \rho^{1/3}} {\ell} 
\end{equation}
which is an easy exercise. 
To minimize the total error, we shall choose 
\begin{equation}
\ell \sim \rho^{-1/3} N^{1/63} \left( \ln N \right)^{-23/21}
\end{equation}
to obtain
\begin{equation}
F_g(\beta,N,L)  \geq F(\beta,N,L) - c_\eta \rho^{2/3} N^{62/63} \left( \ln N \right)^{23/21} 
\end{equation}
This completes the proof of Theorem~\ref{thmAll}.

\section*{Acknowledgments}
Financial support by the Austrian Science Fund (FWF), project Nr. P 27533-N27, is gratefully acknowledged. 

\makeatletter
\renewcommand\@biblabel[1]{#1.}
\makeatother


\begin{thebibliography}{29}
\bibliographystyle{plain}

\bibitem{albeverio} 
Albeverio, S., Gesztesy, F., H\o egh-Krohn, R. and Holden, H.: {\it Solvable Models in Quantum Mechanics}, 2nd ed. (American Mathematical Society, Providence, 2004).

\bibitem{albeverio2}
Albeverio, S.,  H\o egh-Krohn, R. and Streit, L.: {\it Energy forms, Hamiltonians, and distorted Brownian paths},  J. Math. Phys. {\bf 18}, 907--917 (1977).

\bibitem{braaten} Braaten, E. and Hammer, H. W.: {\it Universality in few-body systems with large scattering length},  Phys. Rep. {\bf 428}, 259--390 (2006).

\bibitem{burovski} Burovski, E., Prokof'ev, N.,  Svistunov, B. and Troyer, M.: {\it Critical Temperature and Thermodynamics of Attractive Fermions at Unitarity}, Phys. Rev. Lett. {\bf 96}, 160402 (2006).

\bibitem{correggi}  Correggi, M., Dell'Antonio, G., Finco, D., Michelangeli, A. and Teta, A.: {\it Stability for a system of $N$ fermions plus a different particle with zero-range interactions}, Rev. Math. Phys. {\bf 24}, 1250017 (2012).

\bibitem{correggi2} Correggi, M.,  Dell'Antonio, G.,  Finco, D., Michelangeli, A. and  Teta, A.: {\it A class of Hamiltonians for a three-particle fermionic system at unitarity},  Math. Phys. Anal. Geom. {\bf 18},  Art. 32 (2015).

\bibitem{dell}
 Dell'Antonio, G., Figari, R. and Teta, A.: {\it Hamiltonians for systems of $N$ particles interacting through point interactions},  Ann. Inst. Henri Poincare {\bf 60}, 253--290 (1994).

\bibitem{efimov} 
Efimov, V.N.: {\it Weakly-bound states of three resonantly-interacting particles}, Sov. J. Nucl. Phys. {\bf 12}, 589--595 (1971).

\bibitem{finco}
Finco, D. and Teta, A.: {\it Quadratic forms for the fermionic unitary gas model}, Rep. Math. Phys. {\bf 69}, 131--159  (2012).

\bibitem{Frank} Frank, R.L. and  Seiringer, R.:  {\it Lieb-Thirring Inequality for a Model of Particles
with Point  Interactions}, J. Math. Phys. {\bf 53}, 095201 (2012).

\bibitem{goulko} Goulko, O. and Wingate, M.: {\it Thermodynamics of balanced and slightly spin-imbalanced Fermi gases at unitarity},  Phys.
Rev. A {\bf 82}, 053621 (2010).

\bibitem{jonsell} 
 Jonsell, S.,  Heiselberg, H.  and  Pethick, C. J.: {\it Universal Behavior of the Energy of Trapped Few-Boson Systems with Large
Scattering Length}, Phys. Rev. Lett. {\bf 89}, 250401 (2002).

\bibitem{StabMatt}
Lieb, E.H.  and Seiringer, R.: {\it The Stability of Matter in Quantum Mechanics},
Cambridge Univ. Press (2010).

\bibitem{LSSY}
 Lieb,  E.H., Seiringer, R.,  Solovej,  J.P. and Yngvason, J.: {\it
The Mathematics of the Bose Gas and its Condensation}, Oberwolfach
Seminars, Vol. 34, Birkh\"auser (2005).

\bibitem{LT1} Lieb, E.H.  and Thirring, W.: {\it Bound for the
    Kinetic Energy of Fermions which Proves the Stability of Matter},
  Phys. Rev. Lett. {\bf 35}, 687--689 (1975); Errata {\it ibid.}, 1116
  (1975).
  
\bibitem{LT2}  Lieb, E.H.  and  Thirring, W.: {\it Inequalities
    for the Moments of the Eigenvalues of the Schr\"odinger
    Hamiltonian and Their Relation to Sobolev Inequalities}, in: 
    Studies in Mathematical Physics, Lieb, E.H.,  Simon,  B., Wightman, A., 
  eds., Princeton University Press, 269--303 (1976).

\bibitem{Lu1}   Lundholm, D.,   Nam, P.T. and  Portmann, F.:  {\it Fractional Hardy-Lieb-Thirring and related inequalities for interacting systems}, Arch. Ration. Mech. Anal. {\bf 219}, 1343--1382  (2016).

\bibitem{Lu2}  Lundholm, D.,  Portmann, F. and Solovej, J.P.:  {\it Lieb-Thirring bounds for interacting Bose gases}, Comm. Math. Phys. {\bf 335}, 1019--1056 (2015).

\bibitem{Lu3} Lundholm, D. and   Solovej, J.P.: {\it Local exclusion and Lieb-Thirring inequalities for intermediate and fractional statistics},  Ann. Henri Poincar\'e {\bf 15},  1061--1107 (2014). 
 
 
\bibitem{Lu4} Lundholm, D.  and  Solovej, J.P.: {\it Hardy and Lieb-Thirring
    inequalities for anyons}, Comm. Math. Phys. {\bf 322},  883--908  (2013).
    
    
 \bibitem{MS}
 Moser, T.  and  Seiringer, R.: {\it Stability of a fermionic $N+1$ particle system with point interactions}, preprint, 
 arXiv:1609.08342

 \bibitem{Tamura1991}  Tamura, H.: {\it The Efimov effect of three-body Schr{\"{o}}dinger operators}, J. of Funct.  Anal. {\bf 95}, {433--459} (1991).

\bibitem{thirring} Thirring, W.: {\it Quantum Mathematical Physics}, $2^{\rm nd}$ ed., Springer (2002).

 \bibitem{Thomas1935}  Thomas, L.H.: {\it The interaction between a neutron and a proton and the structure of H3}, Phys. Rev. {\bf 12}, 903--909 (1935).


\bibitem{werner}
  Werner, F. and  Castin, Y.: {\it Unitary gas in an isotropic harmonic trap: symmetry properties and applications},  Phys. Rev. A
{\bf 74}, 053604 (2006).
 
  \bibitem{Yafaev74}  Yafaev, D.R.: {\it On the theory of discrete spectrum of the three-particle Schr{\"{o}}dinger operator}, Mat. Sb. (N.S.) {\bf 94}(136), 567--593 (1974).

\bibitem{zwerger} {\it The BCS--BEC Crossover and the Unitary Fermi Gas},  Zwerger, W., ed.,  Springer Lecture Notes in Physics Vol. 836 (Springer, New York, 2012).

\end{thebibliography}
\end{document}